\documentclass[a4paper,UKenglish,cleveref, autoref, thm-restate]{lipics-v2021}

\usepackage{amsmath, amssymb}
\usepackage{mathtools}
\usepackage{dashbox}
\usepackage{bm}
\usepackage{hyperref}
\usepackage{paralist}
\usepackage{graphicx}
\usepackage{cite}
\usepackage{amssymb}
\usepackage{setspace}
\usepackage{colortbl}
\usepackage[table]{xcolor}
\usepackage{textcomp}
\usepackage{tabularx}
\usepackage{adjustbox}
\usepackage{bm}
\usepackage{thmtools} 
\usepackage{dsfont}
\usepackage{wrapfig}
\usepackage{tcolorbox}
\usepackage{wasysym}
\usepackage{thmtools}
\usepackage{tikz}
\usetikzlibrary{fit}
\usetikzlibrary{shapes}
\usetikzlibrary{decorations.markings}
\usetikzlibrary{arrows}
\usetikzlibrary{shapes.geometric}
\usepackage{tkz-base,tkz-fct}
\usepackage{pgfplots}
\usepackage{pgfplotstable}
\pgfplotsset{compat=1.18}

\usepackage{pifont}
\usepackage{enumitem}
\usepackage[strings]{underscore}
\usepackage{csquotes} 

\newcommand{\etal}

\newcommand{\Z}{\ensuremath{\mathbb{Z}}}

\def\top{\mathrm{top}}
\def\bottom{\mathrm{bottom}}
\def\H{\mathcal{H}}

\hideLIPIcs
\nolinenumbers

\title{Lozenge Tiling  by Computing Distances}

\newlength{\strutdepth}%
\newcommand{\mycomment}[3]{%
    \noindent{\bfseries
    \color{#2}{#1}\color{black}}%
    \strut\vadjust{\kern-\strutdepth%
        \vtop to \strutdepth{%
            \baselineskip\strutdepth%
            \vss\llap{{\large\color{#2}#3\quad\color{black}}}\null%
        }%
    }%
}

\author{Favreau Jean-Marie}{Université Clermont Auvergne, LIMOS, France \and \url{https://jmfavreau.info/} }{j-marie.favreau@uca.fr}{https://orcid.org/0000-0002-2460-6336}{}

\author{Gerard Yan\footnote{Corresponding author}}{Université Clermont Auvergne, LIMOS, France \and \url{https://yangerard.wordpress.com/} }{yan.gerard@uca.fr}{https://orcid.org/0000-0002-2664-0650}{}

\author{Lafourcade Pascal}{Université Clermont Auvergne, LIMOS, France \and \url{https://sancy.iut.uca.fr/~lafourcade/index.html}}{pascal.lafourcade@uca.fr}{[https://orcid.org/0000-0002-4459-511X]}{ANR PRC grant MobiS5 (ANR-18-CE39-0019), SEVERITAS (ANR-20-CE39-0005), ANR Project PRIVA-SIQ and by the French government IDEX-ISITE initiative 16-IDEX-0001 (CAP 20-25)}

\author{Robert Léo}{Université de Picardie Jules Verne, MIS, France}{leo.robert@u-picardie.fr}{[https://orcid.org/0000-0002-9638-3143]}{ANR PRC grant MobiS5 (ANR-18-CE39-0019)}

\authorrunning{J.M. Favreau, Y. Gerard, P. Lafourcade, L. Robert. } 

\ccsdesc{Theory of computation~Computational geometry} 

\keywords{Tiling, Lozenge, Directed Graph, Dicut, Difference Constraints, Bellman-Ford} 

\category{} 

\EventEditors{John Q. Open and Joan R. Access}
\EventNoEds{2}
\EventLongTitle{42nd Conference on Very Important Topics (CVIT 2016)}
\EventShortTitle{CVIT 2016}
\EventAcronym{CVIT}
\EventYear{2016}
\EventDate{December 24--27, 2016}
\EventLocation{Little Whinging, United Kingdom}
\EventLogo{}
\SeriesVolume{42}
\ArticleNo{23}

\begin{document}
\maketitle
\begin{abstract}
The Calisson puzzle is a recent tiling game in which one must tile a triangular grid inside a hexagon with lozenges, under the constraint that certain prescribed edges must remain tile boundaries and that adjacent lozenges along these edges have different orientations.
We present the first polynomial-time algorithm for this problem, with running time 
$O(n^3)$ for a hexagon of side length $n$. This algorithm, called the advancing surface algorithm, can be executed in a simple and intuitive way, even by hand with a pencil and an eraser. Its apparent simplicity conceals a deeper algorithmic reinterpretation of the classical ideas of John Conway and William Thurston, which we revisit from a theoretical computer science perspective.

We introduce a graph-theoretic and difference constraints overlay that complements Thurston’s theory of lozenge tilings, revealing its intrinsic algorithmic structure and extending its scope to tiling problems with interior constraints and without necessarily boundary conditions.
In Thurston’s approach, lozenge tilings are lifted to monotone stepped surfaces in the three-dimensional cubic lattice and projected back to the plane using height functions, reducing the tiling problem  to the computation of heights. We show that, at an algorithmic level, selecting a monotone surface corresponds to selecting a directed cut (dicut) in a periodic directed graph, while height functions  are solutions of a system of difference constraints. In this formulation, a region is tilable if and only if the associated weighted directed graph contains no cycle of strictly negative total weight.
This new graph layer completing Thurston's theory shows that Bellman–Ford’s shortest path algorithm is the only algorithmic primitive needed to decide feasibility and compute solutions. In particular, our framework allows us to decide whether the infinite triangular grid can be tiled while respecting a finite set of prescribed local constraints, a setting in which no boundary conditions are available.
\end{abstract}

\section{Introduction}

Lozenge tilings are found in art, architecture, and monuments around the world. They are universal patterns. In science, their geometric and combinatorial properties have intrigued mathematicians for centuries, drawing the attention of renowned researchers like John Conway and William Thurston in recent decades.

In 2022, Olivier Longuet, a mathematics teacher in a French high school, introduced a geometric logic game called \textit{the Calisson puzzle} (original name, \textit{le jeu du calisson}\footnote{The name \textit{Calisson} comes from a traditional French sweet shaped like a lozenge and produced in Aix-en-Provence, a town in the south of France.}). The puzzle consists in tiling a hexagonal region of the triangular grid with lozenges subject to local constraints. 
Olivier Longuet writes a blog (in French) at  \url{https://mathix.org/calisson/blog/} where he presents the rules of the puzzle and more than five hundred instances of the game \cite{blog}.
The game can also be played online at \url{https://martialtarizzo.github.io/Calisson-Game/index.en.html} \cite{online}.

The game is played in a triangular grid bounded by a regular hexagon. The triangular grid contains three types of edges, depending on their orientation: vertical (the $6$/$12$ o'clock direction), the $4$/$10$ o'clock direction, and the $2$/$8$ o'clock direction.
 A lozenge is the union of two adjacent unit triangles. 
 Its type is determined by the orientation of the edge shared by the two triangles.
 There are three types of lozenges, depending on the orientation of their common edge. Lozenges whose common edge is vertical, in the $4$/$10$ o'clock direction, or in the $2$/$8$ o'clock direction are, for instance, colored respectively in yellow, red, and blue.
The rules of the game are simple. They are illustrated in Fig.~\ref{regles}.

\begin{figure}[bth]
  \begin{center}
		\includegraphics[width=0.8\textwidth]{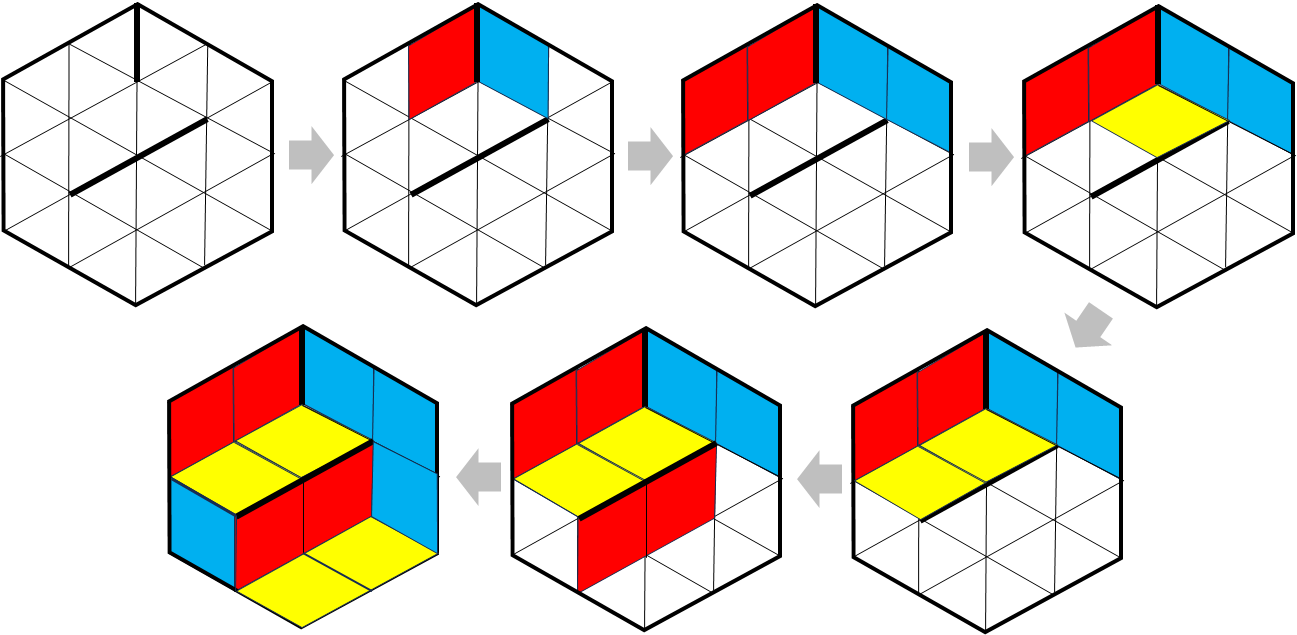}
	\end{center}
	\caption{\label{regles} \textbf{The rules of the puzzle} (Courtesy of Olivier Longuet's \href{https://mathix.org/calisson/blog/}{blog} \cite{blog}): we give ourselves a set of edges, as drawn in in the top left-hand corner. The goal is to tile the hexagon with lozenges in such a way that the edges given as input are adjacent to two lozenges of different colors.}
\end{figure}

\begin{tcolorbox}[colback=red!5!white,
                  colframe=black!75!black,
                  title=Rules of the Calisson puzzle
                 ]

\textbf{\texttt{Input:} } A triangular grid bounded by a regular hexagon, together with a set of edges of the grid denoted by $X_2$ (notation used in the later).

\tcblower
\textbf{\texttt{Goal:} }the problem is to tile the grid with lozenges in such a way that the input edges are not overlapped (such a condition is called \textit{non overlapping constraint}) and are adjacent to two lozenges of different colors (\textit{saliency constraint}).
\end{tcolorbox}

\begin{figure}[ht!]
  \begin{center}
		\includegraphics[width=\textwidth]{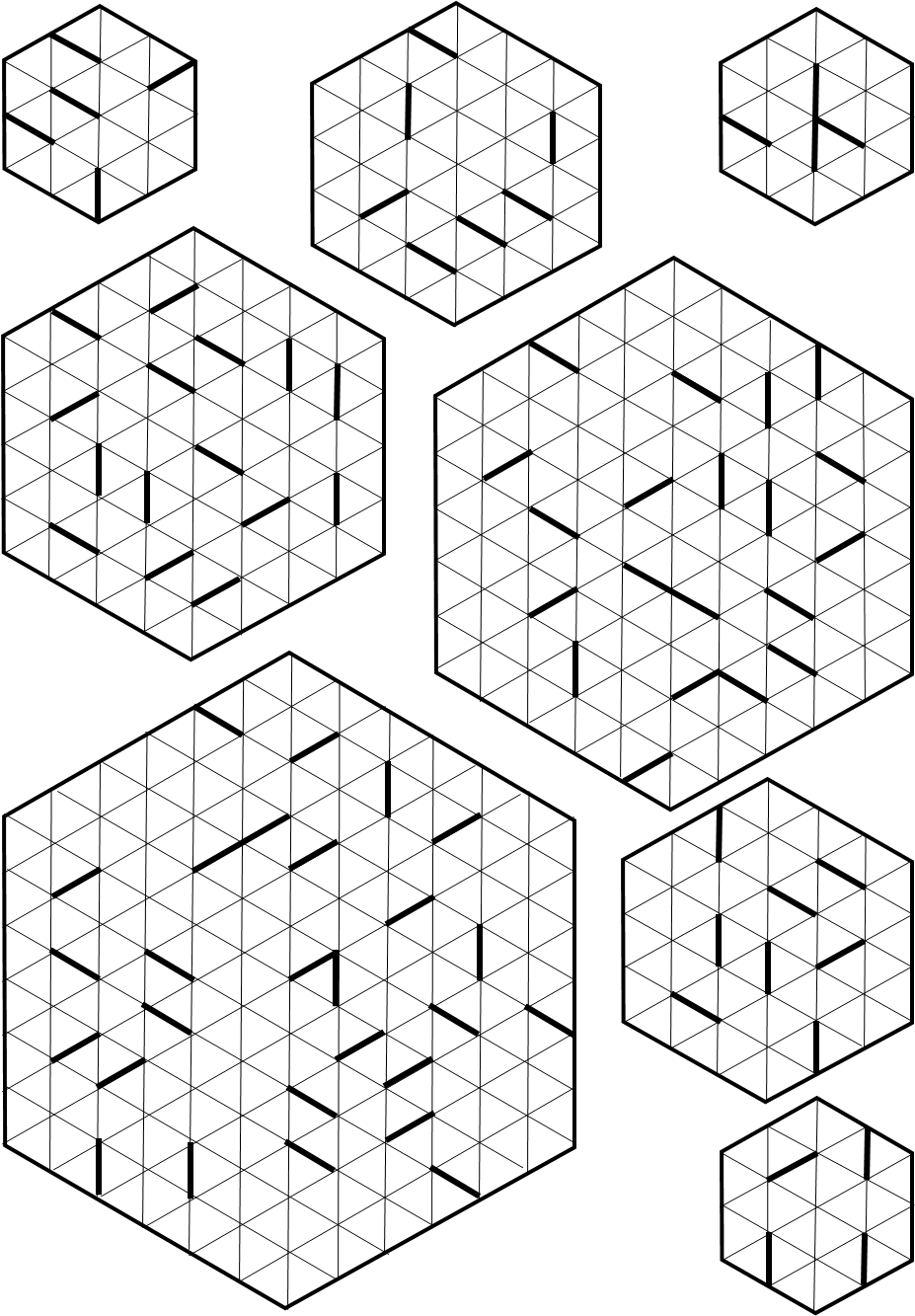}
	\end{center}
	\caption{\label{exercise} \textbf{Instances of the Calisson puzzle.} The instance of size $n=6$ is solved in Fig.~\ref{ads}.}
\end{figure}

The puzzle appeals to the classical observation that lozenge tilings can be viewed as perspective images of stepped surfaces.
As exercise, we invite the reader to solve some of the instances of the puzzle drawn in Fig.~\ref{exercise}.

Surprisingly, the Calisson puzzle introduces a \emph{saliency constraint} (the requirement that lozenges adjacent to a given edge have different orientations) that usual tiling algorithms cannot handle directly. 
The most classical algorithmic strategy for tiling regions by dominoes or lozenges is to reduce the problem to the computation of a matching. The interest of the Calisson puzzle comes from the fact that this strategy fails, as illustrated in Fig.~\ref{mat}, while the other classical algorithm, Thurston's algorithm cannot take into account prescribed interior edges. 
It makes from the computational complexity of the puzzle  an interesting question.
The first contribution of this paper is a polynomial-time algorithm for solving the Calisson puzzle. We called this algorithm the \textit{advancing surface algorithm} since its strategy works by adding cubes so that their surface seems to advance from the back of the hexagon to its front.
This algorithm is sufficiently simple to be carried out with pencil and paper, and is illustrated in Fig.~\ref{start2}.

\begin{figure}[ht]
\begin{center}	\includegraphics[width=0.85\textwidth]{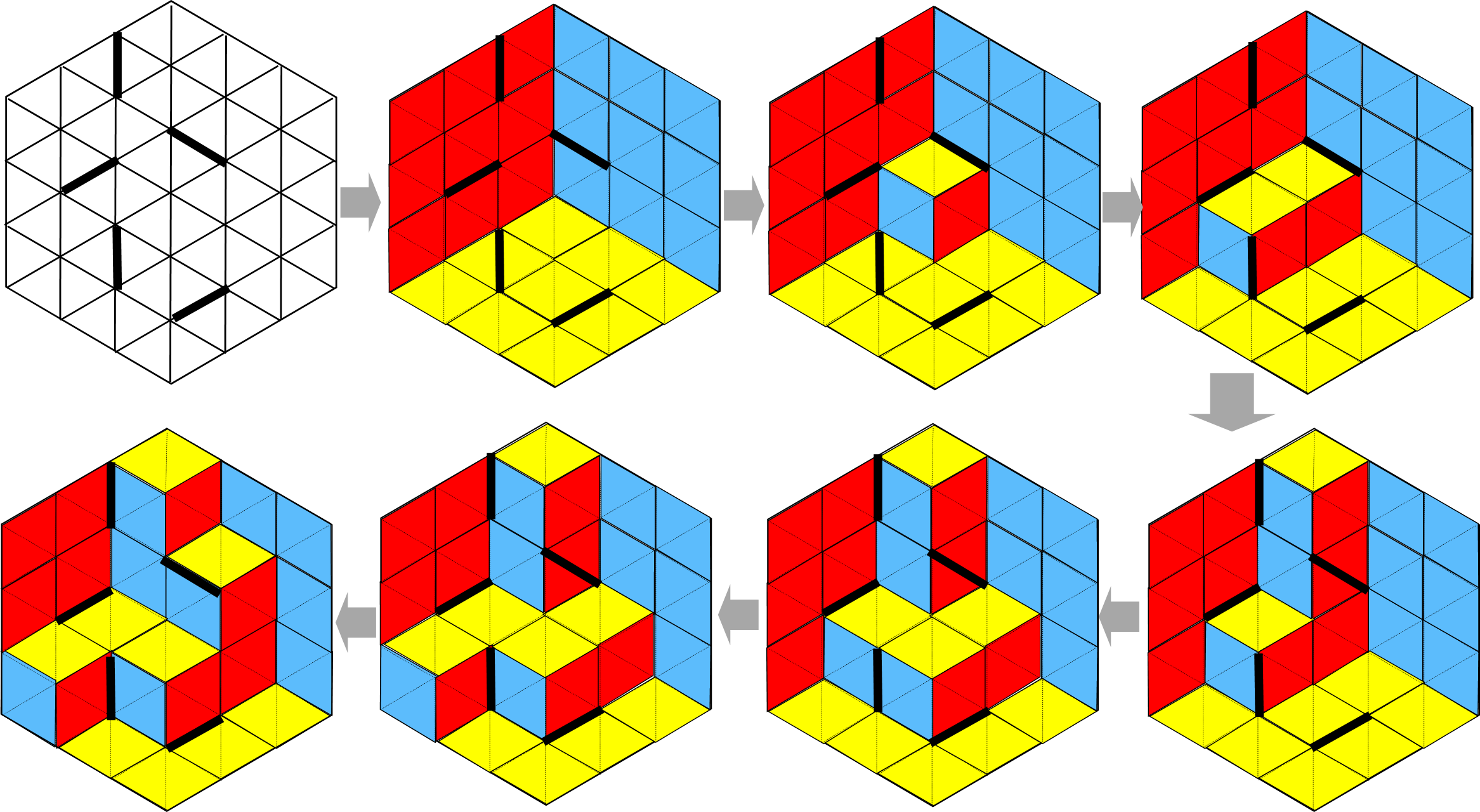}
	\caption{\label{start2} \textbf{The advancing surface  algorithm} for solving an instance of the Calisson puzzle. We start from a tiling looking as the surface of an empty cube and make it advance little by little by adding small cubes in order to satisfy new constraints. It leads sometimes to loose a previously satisfied constraint but it is part of the algorithm.}
 \end{center}
\end{figure}

Since this algorithm is not a straightforward application of existing results, it naturally raises several questions.
Does it extend to arbitrary regions?
How does it relate to the classical theory of lozenge tilings?
Addressing these questions leads us to the central part of the paper where we revisit and complete  the theory and folklore of lozenge tilings from an algorithmic point of view. In Conway and Thurston approach, a lozenge tiling of a simply connected region is lifted to a monotone stepped surface in a three-dimensional cubic lattice.
This surface can then be projected back to the plane using a height function defined on the vertices of the triangular grid.
In this framework, tiling a region reduces to finding a height function satisfying local constraints.

We reformulate Thurston’s theory in the language of directed graphs.
Monotone stepped surfaces in the three-dimensional cubic lattice are interpreted as directed cuts (dicuts) in a periodic directed graph whose vertices correspond to unit cubes.
Within this framework, height functions naturally arise as solutions of systems of difference constraints.
It shows that lozenge tilings can be computed by solving shortest-path problems in weighted directed graphs. With negative weights, the shortest paths algorithm is  Bellman–Ford  \cite{Bellman}. This original process is illustrated in Fig.~\ref{end}.

\begin{figure}[ht]
\begin{center}
\includegraphics[width=0.90\textwidth]{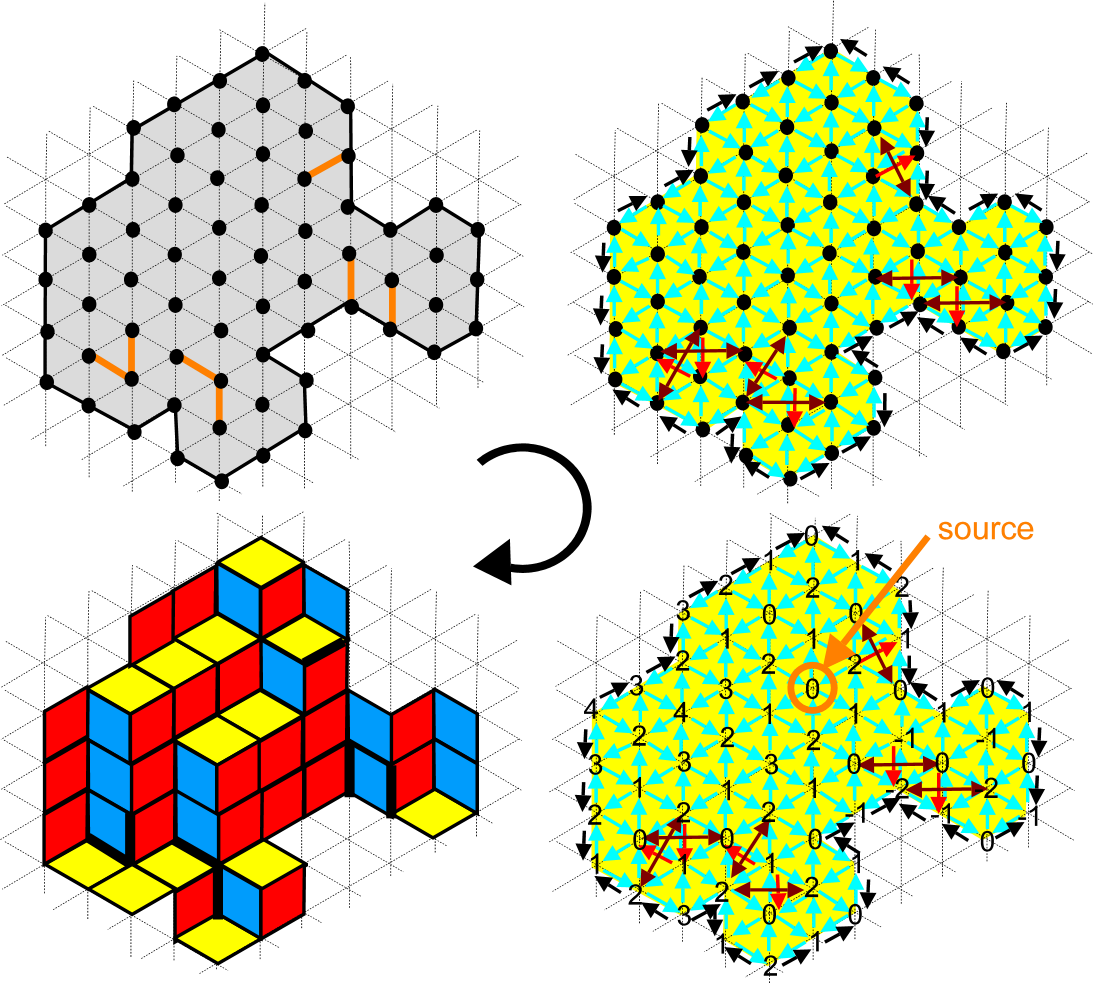}
\end{center}
\caption{\label{end} \textbf{Solving a lozenge tiling instance with non-overlapping and saliency constraints through distance computation.}
The first step builds a weighted directed graph, with weights $+1$ for blue edges, $0$ for brown edges, and $-1$ for black and red edges.
The second step computes shortest-path distances from an arbitrary source vertex $s$.
If the graph contains a cycle of strictly negative total weight, the distance constraints are infeasible and the tiling instance admits no solution.
Otherwise, as shown here, a tiling is recovered by connecting adjacent vertices whose distance to $s$ differ by exactly $1$.}
\end{figure}

The directed graph overlay introduced in this work constitutes a substantial complement of Thurston’s classical theory of lozenge tilings.
It provides a unified and flexible algorithmic framework, allowing one to incorporate interior constraints—such as non-overlapping and saliency constraints—in a simple and systematic way.
To the best of our knowledge, such constraints had not previously been integrated explicitly into the height-function framework.
This approach makes it possible to solve a wide variety of lozenge tiling problems, several of which are illustrated in Fig.~\ref{questions}.
Finally, beyond its theoretical interest, the method has an appealing pedagogical aspect: before understanding the underlying theory, computing a tiling by merely running a shortest-path algorithm, as in Fig.~\ref{end}, has the flavor of a magic trick, which can surprise and attract the attention of young audiences who are sensitive to recreational mathematics.

\begin{figure}[ht]
  \begin{center}
		\includegraphics[width=0.85\textwidth]{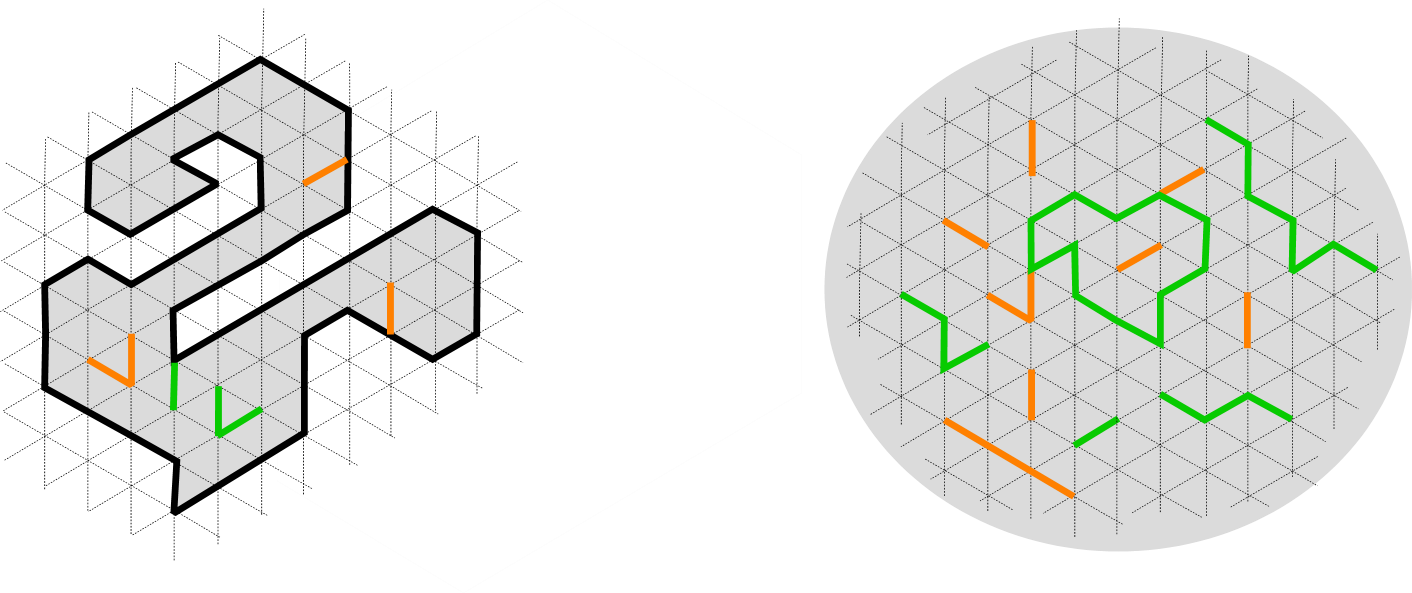}
	\caption{\label{questions} \textbf{The tiling problems that we solve.} The left image is an instance of the problem that we denote \texttt{Tiling$(R,X_1,X_2)$}. Given the finite simply connected  region $R$ and two sets of edges $X_1$ (the green edges) and $X_2$ (the orange edges), the problem is to tile the region $R$ without overlapping the green and orange edges (non overlapping constraint). The lozenges adjacent to orange edges also must have different orientation (saliency constraint). The right image illustrates the toy problem \texttt{Tiling$(\triangle ,X_1,X_2)$} where the region $R$  to be tiled is the whole triangular grid denoted $\triangle $ and thus has no boundary. The main contribution of the paper shows how to solve these problems through a system of difference constraints and thus with distance computations.  }
    \end{center}
\end{figure}

The paper is organized as follows. In Section~2, we present a state of the art of the classical algorithm that can be used for solving the Calisson puzzle and which turn out to fail. In Section~3, we state formaly a generic problem of tilability and present our results.
In Section~4, we explain how the lozenges tiling problems can be reduced to systems of difference constraints. This equivalence passes through directed cuts in a periodic directed graph. The algorithms are presented in details in  Section~5.

\section{Classical Algorithms Fail to Solve the Calisson puzzle}\label{fail}

A reasonable idea for solving Calisson puzzles is to use classical techniques from tiling problems.  The classical Thurston's algorithm for determining whether a region $R$ is tilable by calissons can take into account neither the interior edges of the instance, nor the saliency constraints. It is therefore not directly able to solve the Calisson puzzles without the kind of complement that we describe in the paper. 
However, there are other approaches, either used for tilability by dominos
or for general combinatorial problems. Three methods are worth examining: $3$-SAT, matching in a bipartite graph and a reduction to Maximum Independent Set.\\

{\bf 3-SAT.} The Calisson puzzle is easily expressed as a 3-SAT formula.
Consider a variable $a_c$ for each lozenge $c$ of the region to be tiled. The variable $a_c$ is equal to $1$ if the lozenge $c$ is included in the solution's tiling and $0$ otherwise.
We have four classes of clauses.
\begin{compactenum}
    \item $3$-clauses for expressing the condition that all the triangles of the region must be covered by at least one lozenge 
    (for boundary triangles, these are 2-clauses or even 1-clauses).
    \item The second class of clauses expresses the constraint that the tiles must not overlap. 
    $2$-clauses for avoiding to cover twice a given triangle. We impose $\overline a_c \vee \overline a_{c'}$ to ensure that they do not overlap.
    \item $1$-clauses for expressing the non overlapping constraints of the input edges.
    \item $2$-clauses for expressing the saliency constraints. 
\end{compactenum}

The number of variables and clauses is $O(n^2)$. 
Unfortunatly, this reduction to 3-SAT does not provide a Horn formula and thus it is not known for being solvable in polynomial time.\\

{\bf Matching.}  A classic, non-exponential approach to compute tilings by dominoes (lozenges are unions of two adjacent triangles) is to reduce the problem to the computation of a perfect matching in the graph of adjacency of the triangles (see for example \cite{remila}). This approach is illustrated in Fig.~\ref{mat}. It allows us to tile the region $R$ by taking into account the non overlapping constraints given by the edges of $X$, but not the saliency constraints. Adapting the matching strategy to take into account the saliency constraints does not seem easy.\\ 

\begin{figure}[ht]
  \begin{center}
	\includegraphics[width=0.95\textwidth]{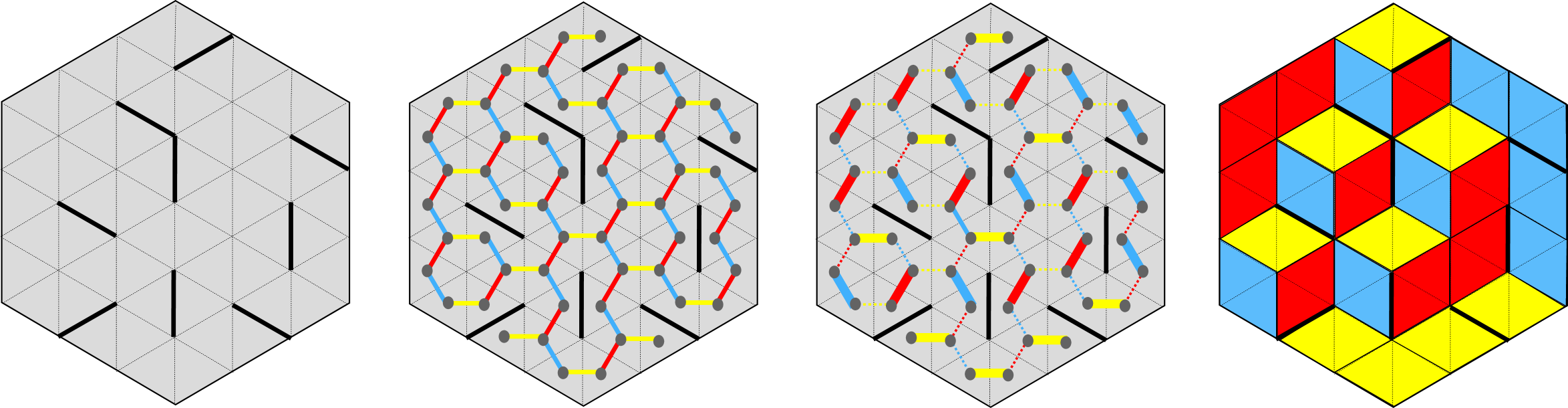}
	\end{center}
	\caption{\label{mat} \textbf{Try to solve a puzzle through a matching computation}. Left, an instance of the Calisson puzzle. In the center, the adjacency graph of the triangles and a perfect matching of the triangles. On the right, the resulting tiling  satisfies the non overlapping condition (i) but violates the saliency condition (ii).  }
\end{figure}

\begin{figure}[ht!]
  \begin{center}		\includegraphics[width=0.85\textwidth]{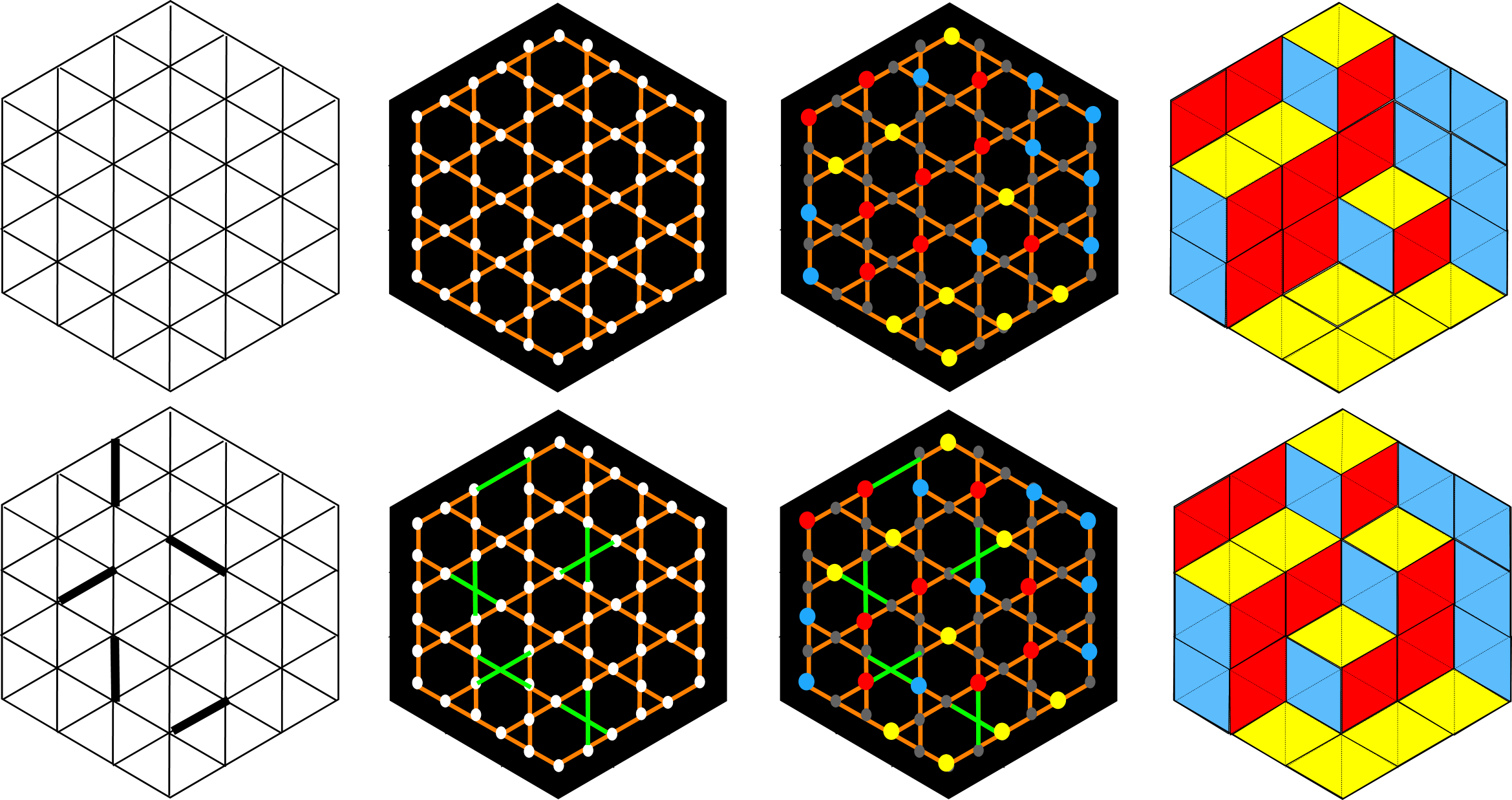}
	\end{center}
	\caption{\label{ind2} \textbf{Reducing Calisson puzzles to maximum independent set.} In the first row, the interior edges of the triangular grid are represented by white nodes. A pair of  edges/nodes is connected iff the pair of edges share a triangle. Then an independent set of white nodes corresponds to a non overlapping set of lozenges. In the second row, we consider an instance of the Calisson puzzle. The puzzle constraints are taken into account by removing the input edges/nodes and reconnecting the orphan nodes (by the green edges). A solution of the Calisson puzzle is given by an independent set of $3n^2$ edge/nodes.  }
\end{figure}

 \textbf{Maximum Independent Set in an almost perfect graph.} The idea is to consider a  dual graph whose nodes are the edges of the triangular grid and where nodes/edges are linked when they belong to a common triangle. In this graph, an independent set of nodes can be seen as a set of lozenges which do not overlap. It reduces the computation of a tiling of a region to the computation of a maximum independent set as done in Fig.~\ref{ind2}. The saliency constraint can be taken into account by replacing the forbidden edges by crossed links (green crosses in the second row of  Fig.~\ref{ind2}). Unfortunatly, this new graph is no more a line perfect graph for which a maximum independent set is known to be computable in polynomial time. 
 
\section{Problem Statement and Results}

We first provide a general formulation of lozenge tiling problems with optional non overlapping and saliency constraints.
We then state the main results of the paper.

\subsection{Problem Statement}

\subparagraph{Regions to which our results apply.}
The triangular grid is denoted $\triangle$. 
We use the letter $R$ to denote a region of the triangular grid.
Formally, $R$ is a simplicial complex whose sets of triangles, edges, and vertices are respectively denoted by $R^2$, $R^1$, and $R^0$.
Motivated by the Calisson puzzle, particular attention is paid to regular hexagonal regions.
The regular hexagon of size $n$ is denoted by $\varhexagon_n$.

The results presented in this paper do not require the region $R$ to be compact: unbounded regions are allowed.
However, we assume that $R$ is simply connected (it is connected and it contains no holes).
In particular, we consider instances where the region $R$ is the entire infinite triangular grid, denoted $R=\triangle$, together with a finite set of input edges.
The boundary of $R$ is denoted by $\partial R$.

Our results can also be extended to regions whose boundary passes several times through the same vertices or edges.
As long as no saliency constraint is imposed on such shared edges, this can be handled by duplicating boundary vertices and edges, as illustrated in Fig.~\ref{duplication}.
For the sake of simplicity, we omit these cases from the formal statements and proofs.

\begin{figure}[ht]
  \begin{center}
    \includegraphics[width=0.65\textwidth]{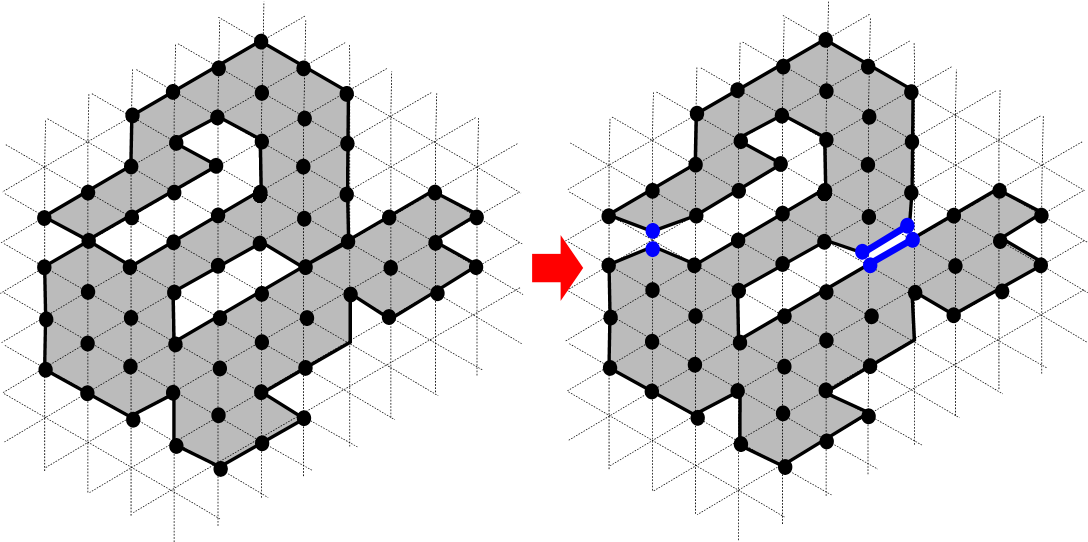}
  \end{center}
  \caption{\label{duplication}
  \textbf{A region $R$} within the scope of our results that requires duplication of boundary vertices and edges to apply our framework.}
\end{figure}

\subparagraph{Edge constraints.}
We consider two types of constraints on edges of the triangular grid.
The first set, denoted $X_1$, consists of edges that must not be overlapped by any lozenge. This is the \emph{non overlapping constraint}.
The second set, denoted $X_2$, consists of edges that must not be overlapped and whose two adjacent lozenges must have distinct orientations. This additional requirement is the \emph{saliency constraint}.

\subparagraph{Generic lozenge tiling problem.}
This leads to the following general tiling problem:

\medskip
\noindent
\texttt{Tiling$(R,X_1,X_2)$}
\vspace{-0.2cm}
\begin{itemize}
    \item \texttt{Input:} A region $R \subseteq \triangle$ and a set $X \subseteq \triangle^1$ of edges of the triangular grid, partitioned into two subsets $X = X_1 \cup X_2$.
    \item \texttt{Output:} A lozenge tiling of the region $R$ such that
    (i) no edge of $X$ is overlapped by a lozenge, and
    (ii) for every edge in $X_2$, the two adjacent lozenges have different orientations.
\end{itemize}

A Calisson puzzle with a set of interior saliency constraints $X_2$ corresponds to the instance
\texttt{Tiling$(\varhexagon_n,\emptyset,X_2)$}.

\subsection{Results}
We present two algorithmic approaches, both rooted in the three-dimensional interpretation of lozenge tilings as stepped surfaces.

\subparagraph{The Advancing Surface Algorithm.}

The advancing surface algorithm is illustrated in Figs.~\ref{start2} and \ref{ads} on an instance of the Calisson puzzle
\texttt{Tiling$(\varhexagon_n,\emptyset,X_2)$}.
It fills an initially empty cube of size $n \times n \times n$ by progressively adding unit cubes,
each time adding as few cubes as possible,
so as to satisfy the non overlapping and saliency constraints
as a simple graph traversal algorithm.
It can be easily extended to bounded, simply connected regions of the triangular grid.
This extension only requires to add a preprocessing step that computes the minimal and maximal tilings
(for instance by running classical Thurston's algorithm \cite{Thurston}).
The advancing surface algorithm then progressively adds unit cubes in the volume between the two stepped surfaces obtained by lifting the two extremal tilings. The algorithm and its analysis are detailed in Subsection \ref{tasa}.

\begin{theorem}\label{wall}
The advancing surface algorithm solves
\texttt{Tiling$(R,X_1,X_2)$}
for a bounded, simply connected region $R$ in running time $O(|\partial R|\cdot |R|)$.
\end{theorem}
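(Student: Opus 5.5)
The plan is to lift tilings to monotone stepped surfaces, i.e.\ to down-closed sets of unit cubes (order ideals) in the cubic lattice. We then show that the non overlapping and saliency constraints turn into monotone implications on cubes, so that the advancing surface algorithm is a least-fixed-point (Horn-style) computation.

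\textbf{Preprocessing.} Using Thurston's algorithm, compute the minimal and maximal tilings of $R$ in time $O(|R|)$; if none exists, report infeasibility. Their lifts bound a finite volume $V$ of cubes. Every tiling of $R$ corresponds to an order ideal $I$ with $I_{\min}\subseteq I\subseteq I_{\max}$. The number of cubes above each vertex column is at most $O(|\partial R|)$, since heights are Lipschitz and the boundary fixes them within a range of size $O(|\partial R|)$. Hence $|V| = O(|\partial R|\cdot|R|)$.

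\textbf{Constraints as monotone rules.} For an edge $e\in X$ with endpoints $u,v$, non overlapping says the lift of $e$ is an edge of the surface. Equivalently, the height difference $h(u)-h(v)$ takes a prescribed extremal value, which gives one difference inequality on top of the Lipschitz ones. Saliency for $e\in X_2$ additionally forbids the two adjacent lozenges from being coplanar; this forces the edge to be a convex or concave crease. The key lemma is that, in the presence of non overlapping, this again reduces to difference inequalities $h(a)-h(b)\le c$, with both sides read as column heights. The case analysis runs over the three orientations of $e$ and the two possible creases; one crease type is incompatible with the prescribed height jump, which removes the disjunction. Each inequality is a rule of the form ``if cube $q$ is in $I$ then cube $q'$ must be in $I$'', and the Lipschitz/order-ideal conditions are rules of the same kind. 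The set of feasible ideals is then closed under intersection, so a minimal feasible ideal exists whenever any feasible ideal does.

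\textbf{Algorithm and correctness.} Start from $I=I_{\min}$. Repeatedly pick a violated rule, add the forced cube together with its down-closure, and stop when no rule is violated, yielding a solution. Abort when a cube outside $I_{\max}$ is forced, proving infeasibility. By induction, every cube added belongs to every feasible ideal, since each is forced by monotone rules from cubes already forced. Termination with $I$ inside $I_{\max}$ therefore yields the minimal solution, and abortion proves that none exists. This also explains why previously satisfied constraints may become violated again: the rules are checked after each addition, not once.

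\textbf{Complexity.} Each cube is added at most once, and each addition triggers re-examination of only $O(1)$ local rules: the edges incident to the column and its neighbours. A worklist of columns whose heights changed then gives total time $O(|V|)=O(|\partial R|\cdot|R|)$.

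\textbf{Main obstacle.} The delicate step is the saliency lemma, namely that the disjunction ``different orientations'' collapses to a single monotone implication once non overlapping holds. If the collapse fails in some orientation, I would fall back on the paper's graph formulation, where saliency is encoded by weight $-1$ arcs. The advancing surface algorithm would then be read as a Bellman–Ford-like label-correcting process in which heights only increase, and the bound on the total increase gives the same complexity.
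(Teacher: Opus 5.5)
\paragraph{Verdict.} Your overall plan matches the paper's. Ideals between $I_{\min}$ and $I_{\max}$ are the bottom sides of dicuts of $\H_R^{X_1,X_2}$ lying between the lower and upper cuts. Your Horn closure from $I_{\min}$ is the paper's predecessor-closure traversal, your abort test is its ``a cube of the top part of the upper cut is added'', and your count of the cubes in between gives the same $O(|\partial R|\cdot|R|)$ bound.

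\paragraph{The gap.} The gap is the saliency step you flagged, and the lemma as you state it is false: non overlapping does \emph{not} exclude one of the two crease types. Write $e=uv$ with $u\to v$ positive, and let $w,w'$ be the two vertices adjacent to both $u$ and $v$; both $v\to w\to u$ and $v\to w'\to u$ are positive paths.
\begin{itemize}
\item Non overlapping forces $h(v)=h(u)+1$ in every case, so this jump carries no information about the crease.
\item Each apex then independently satisfies $h(w),h(w')\in\{h(u)-1,\,h(u)+2\}$, according to which of its two other triangle edges is crossed.
\item Of the four combinations, the two with $h(w)=h(w')$ are the two crease types (Cases~1 and~3 of Fig.~\ref{fourcases}), and both are legal. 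The two with $h(w)\neq h(w')$ are the coplanar, same-orientation configurations (Cases~2 and~$2'$).
\end{itemize}
Both crease types genuinely occur for the same orientation of $e$. In the hexagon of side $1$, declaring a single spoke salient forces either the empty-corner tiling or the full-cube tiling, depending on which of two collinear spokes you choose. The salient edge is then a valley in one case and a ridge in the other. A rule that keeps one crease type per orientation is therefore strictly stronger than the constraint, and your algorithm would reject feasible instances.

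\paragraph{The fix.} The missing idea is that ``ridge or valley'' becomes a conjunction in height terms. Under non overlapping, saliency on $e$ is equivalent to $h(w)=h(w')$, i.e.\ to the two difference constraints $h(w)-h(w')\le 0$ and $h(w')-h(w)\le 0$. These constrain the two apexes, not the endpoints of $e$. In cube language, the same-height cubes $L_k$ over $w$ and $R_k$ over $w'$ are never separated, which gives one monotone rule in each direction. This is exactly what the paper's two weight-$0$ lateral arcs encode. Your fallback attributes saliency to weight-$-1$ arcs, but those encode only non overlapping.

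With this replacement, all your rules are monotone implications and feasible ideals are closed under intersection. The rest of your argument (correctness of the closure, the abort criterion, and the complexity bound) then goes through as written.
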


In the special case of the Calisson puzzle, the region is the hexagon $\varhexagon_n$. Then the two extremal tilings are straightforward (the projections of the surface of an empty and a full $n\times n \times n$ cube). 

\begin{corollary}\label{maincalissons}
The advancing surface algorithm solves  Calisson puzzle instances\\
\texttt{Tiling$(\varhexagon_n,\emptyset,X_2)$}
with running time $O(n^3)$.
\end{corollary}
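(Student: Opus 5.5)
The plan is to derive the corollary directly from Theorem~\ref{wall}, applied to the instance \texttt{Tiling$(\varhexagon_n,\emptyset,X_2)$}. The region $\varhexagon_n$ is bounded and simply connected. It is a regular hexagon of side $n$ in the triangular grid, so it qualifies as an admissible region. Taking $X_1=\emptyset$ is simply a special case of the general input. Hence the advancing surface algorithm solves the Calisson puzzle correctly, and it remains only to bound $|\partial R|$ and $|R|$ as functions of $n$.

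\textbf{Counting.} The boundary of $\varhexagon_n$ has six sides, each made of $n$ unit edges. Therefore $|\partial R| = 6n$ edges and $6n$ vertices, which is $O(n)$. For the interior, $\varhexagon_n$ decomposes into six equilateral triangles of side $n$, each containing $n^2$ unit triangles. So $|R^2| = 6n^2$. Since every triangle has three edges and every vertex has degree at most six, also $|R^1| = O(n^2)$ and $|R^0| = 3n^2+3n+1 = O(n^2)$. In every reasonable interpretation of $|R|$, we get $|R| = O(n^2)$. Plugging in gives $O(|\partial R|\cdot|R|) = O(n\cdot n^2) = O(n^3)$.

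\textbf{Preprocessing.} The running time in Theorem~\ref{wall} includes computing the minimal and maximal tilings, for instance by Thurston's algorithm. For the hexagon this step is even simpler, as noted right after the theorem. The two extremal tilings are the projections of the visible faces of the empty and of the full $n\times n\times n$ cube. Both can be written down directly in $O(n^2)$ time, which does not affect the $O(n^3)$ bound.

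\textbf{Main point of care.} The only step needing attention is checking that the quantities $|\partial R|$ and $|R|$ in Theorem~\ref{wall} are measured as they are for the hexagon: boundary length linear in $n$, and region size (triangles, edges or vertices) quadratic in $n$. For a fully self-contained check, I would also give a direct volumetric argument. The advancing surface only ever adds unit cubes inside the $n\times n\times n$ box, so at most $n^3$ cubes are added. If each insertion and each check of a constraint it affects costs amortized $O(1)$ via the graph traversal, this independently confirms the $O(n^3)$ bound.
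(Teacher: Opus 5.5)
Your proposal is correct and matches the paper's argument. The paper also specializes Theorem~\ref{wall} to $\varhexagon_n$, notes that the lower and upper cuts are just the empty and full $n\times n\times n$ cube, and plugs in $|R|=\Theta(n^2)$ and $|\partial R|=\Theta(n)$ to get $O(n^3)$. Your explicit counts and the extra volumetric bound of $n^3$ cubes are consistent with this, though the argument does not need them.
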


The theorem \ref{wall} and Corollary \ref{maincalissons} are proved in Subsection \ref{tasa} after the introduction of a path of consecutive reductions explaining why a so simple algorithm works.

\subparagraph{Reduction of \texttt{Tiling$(R,X_1,X_2)$} to a System of Difference Constraints.}

The main contribution of this paper is the completion of Thurston’s theory of lozenge tilings with a graph-theoretical layer.
This additional structure allows us to translate the tiling problem
\texttt{Tiling$(R,X_1,X_2)$}
into an equivalent system of difference constraints. This system
is induced by a weighted directed graph, denoted
$DC(R,X_1,X_2)$ and defined as follows:

\medskip

\begin{figure}[ht]
  \begin{center}
		\includegraphics[width=\textwidth]{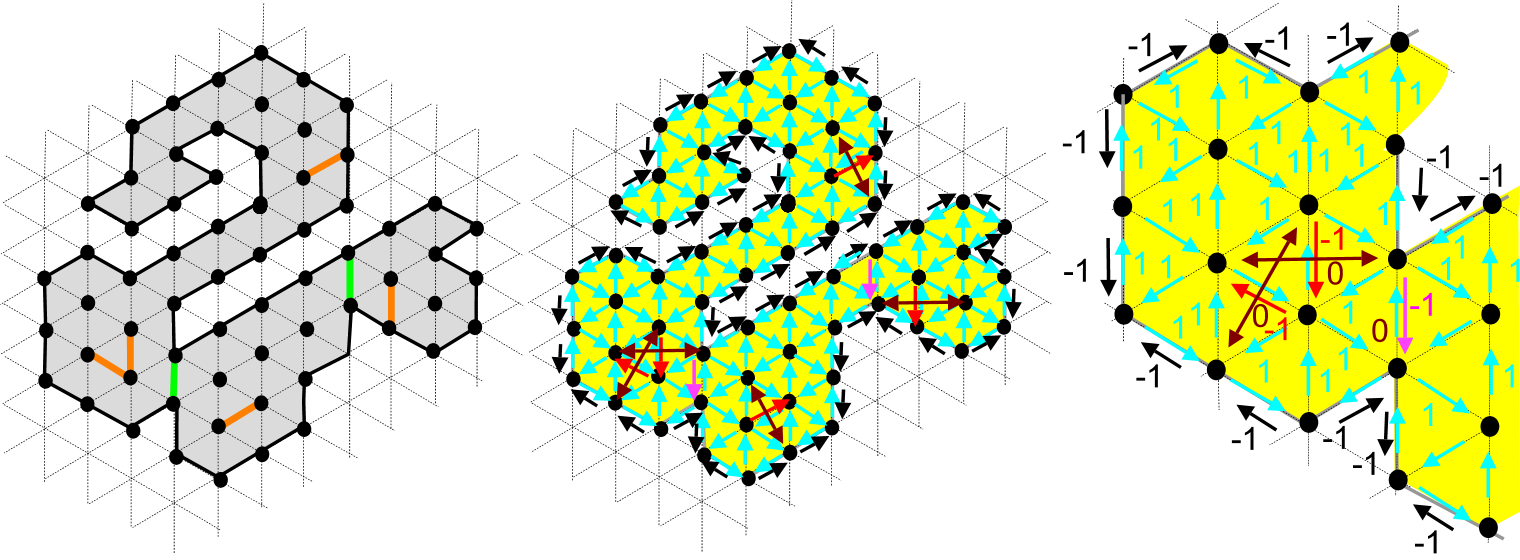}
	\caption{\label{DCgraph} \textbf{Construction of the graph $DC(R,X_1,X_2)$.}
	All the edges of the region in positive direction are weighted by $+1$. 
	Each edges $e$ of $\partial X$ (in black)  or of $X_1$ (in green) generates an edge directed in a negative direction of $e$ weighted by $-1$. Each edge $e_2$ of $X_2$ (drawn in orange) generates an edge directed in a negative direction of $e_1$ weighted by $-1$ and a pair of lateral edges of weights $0$. }
 \end{center}
\end{figure}

\begin{tcolorbox}[colback=red!5!white,
                  colframe=black!75!black,
                  title=Construction of the DC graph
                 ]
\textbf{\texttt{Some notations:} } An oriented edge $u \rightarrow v$ of the triangular grid is said to be
\emph{positive} if it points in the time direction of $12$, $4$, or $8$ o’clock,
and \emph{negative} if it points in the time direction of $2$, $6$, or $10$ o’clock.

We also introduce \emph{lateral edges}, which complement the edges of the triangular grid.
Given an edge $u,v$ of the triangular grid, let $w$ and $w'$ be the two vertices
adjacent to both $u$ and $v$.
The lateral edges associated with $u,v$ are the two oriented edges
$w \rightarrow w'$ and $w' \rightarrow w$.
\tcblower 
\textbf{\texttt{Construction of the weighted directed graph
$DC(R,X_1,X_2)$:\\ } } 
The construction can be followed in Fig.~\ref{DCgraph}.
The vertex set of $DC(R,X_1,X_2)$ is the set $R^0$ of the vertices of the region $R$.
Its directed edges are defined as follows:
\begin{itemize}
    \item Every positively oriented edge contained in $R$ is assigned weight $+1$.
    \item For every edge $e$ in $X = \partial R \cup X_1 \cup X_2$,
    the negatively oriented version of $e$ is assigned weight $-1$.
    These edges encode the non-overlapping constraints.
    \item For every edge $e \in X_2$, the pair of lateral edges associated with $e$
    is assigned weight $0$.
    These edges encode the saliency constraints.
\end{itemize}
\end{tcolorbox}
\medskip

A system of difference constraints induced by a weighted directed graph
consists in assigning a value $h(v)$ to each vertex $v$
such that, for every directed edge $u \rightarrow v$ with weight $w$,
the inequality
$h(v) - h(u) \leq w$
is satisfied (see Fig.~\ref{dc} or reference CLRS  \cite{intro} section 24.4).

\begin{figure}[ht]
  \begin{center}
		\includegraphics[width=\textwidth]{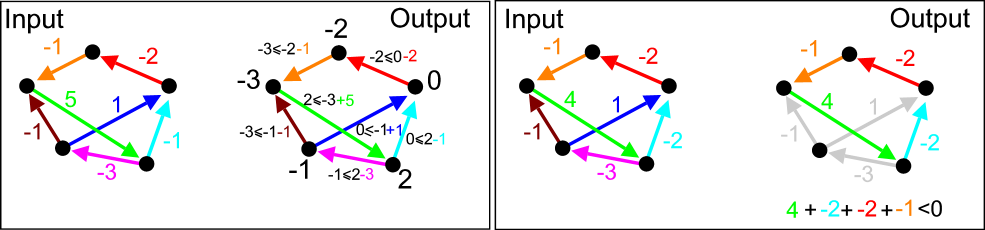}
	\caption{\label{dc} \textbf{Two difference constraints systems.}
	The system induced by the left directed weighted graph admits a solution in black while the right system is not feasible because it contains a cycle with strictly negative total weight.}
 \end{center}
\end{figure}

We claim that the tiling instance
\texttt{Tiling$(R,X_1,X_2)$}
is equivalent to the system of difference constraints induced by
$DC(R,X_1,X_2)$.

\begin{restatable}{theorem}{central}\label{thm:dc}
There is a one-to-one correspondence between tiling which are solutions of
\texttt{Tiling$(R,X_1,X_2)$}
and integer-valued functions
$h : R^0 \rightarrow \mathbb{Z}$
satisfying the system of difference constraints induced by
$DC(R,X_1,X_2)$, up to an additive constant.
\end{restatable}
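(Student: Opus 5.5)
The plan is to go through Thurston's height function, adapted to the three families of edges of $DC(R,X_1,X_2)$. Fix the standard $3$-colouring $c:\triangle^0\to\mathbb{Z}/3\mathbb{Z}$ of the grid vertices, chosen so that every positive edge $u\to v$ satisfies $c(v)=c(u)+1$. Every triangle of $\triangle$ is bounded by a directed $3$-cycle made of positive edges; indeed this is why positive directions are $12$, $4$, $8$ o'clock.

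\emph{From tilings to heights.} Given a lozenge tiling $T$ of $R$, define on each positive edge $u\to v$ in $R$ the increment $\delta(u\to v)=+1$ if the edge is a boundary edge of a tile, and $\delta(u\to v)=-2$ if it is the short diagonal of a lozenge. Around each triangle exactly one edge is covered by the lozenge containing that triangle, so the increments are $\{1,1,-2\}$ and sum to $0$. Hence $\delta$ is closed on every elementary cycle. Since $R$ is simply connected, $\delta$ integrates to a function $h:R^0\to\mathbb{Z}$, unique up to an additive constant, with $h\equiv c$ modulo $3$ after normalisation. We then check the three families of constraints of $DC$:
\begin{itemize}
\item A positive edge gives $h(v)-h(u)\in\{1,-2\}$, hence $h(v)-h(u)\le 1$.
\item An edge of $X=\partial R\cup X_1\cup X_2$ is not overlapped, so its increment is $+1$. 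This is exactly the reversed constraint $h(u)-h(v)\le -1$.
\item For $e=uv\in X_2$ with lateral vertices $w,w'$, the two lozenges on either side of $e$ contain the triangles $uvw$ and $uvw'$. A short case check on which of $uw,vw$ and which of $uw',vw'$ is covered shows the following. The two lozenges have different orientations exactly when the covered edges are not parallel, and this happens exactly when $h(w)=h(w')$, which is what the two weight-$0$ lateral edges encode.
\end{itemize}
I expect this case check to be the main local computation. It reduces to the six configurations of two lozenges glued along a non-overlapped edge.

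\emph{From heights to tilings.} Conversely, let $h$ satisfy the system. Here lies the main obstacle: the difference constraints alone admit degenerate solutions. For instance, in the interior a triangle could have increments $(0,0,0)$ or $(1,0,-1)$, so some normalisation is needed for the map to be one-to-one. I would handle this by taking the integer-valued solutions to be those compatible with the colouring, $h\equiv c \pmod 3$, i.e.\ the integer heights coming from the lift to the cubic lattice. This matches how heights arise from dicuts, as announced in the introduction. With this normalisation, every positive increment is $\equiv 1\pmod 3$ and at most $1$, so it lies in $\{1,-2,-5,\dots\}$. The three increments around a triangle sum to $0$, which forces exactly the pattern $\{1,1,-2\}$.

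Declare an edge of $R$ covered when its increment is $-2$. Each triangle of $R$ then has exactly one covered edge, and such an edge is interior to $R$, because boundary edges are forced to $+1$. So pairing each triangle with its neighbour across its covered edge is an involution, and it defines a lozenge tiling of $R$. The $X$-constraints say no edge of $X$ is covered, and the lateral constraints give saliency by the same case check as above. The two constructions are inverse to each other: a tiling determines $\delta$, and $\delta$ determines $h$ up to a constant by simple connectivity. This yields the bijection, and also justifies the recovery rule of Fig.~\ref{end}, which keeps the edges with difference exactly $1$.
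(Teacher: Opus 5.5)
Your proof is correct, and it takes a genuinely different route from the paper.

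\textbf{The paper's route.} It works in three dimensions. It lifts a tiling to its roof via Thurston's theorem and reads the roof as a dicut of the enhanced ascendant graph $\H_R^{X_1,X_2}$ (Proposition~\ref{propi}). It then defines $h(v)$ as the height of the highest cube over $v$ lying below the cut. Finally, it obtains one difference constraint per edge of $\H_R^{X_1,X_2}$ by $\mathbb{Z}$-periodicity.

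\textbf{Your route.} You stay in the plane: $h$ is the primitive of the closed $+1/-2$ cochain, and the three constraint families are checked locally. Your lateral case check is right: once $h(v)=h(u)+1$, both $h(w)$ and $h(w')$ lie in $\{h(u)+2,\,h(u)-1\}$, and the two covered edges are parallel exactly when these values differ. The converse is forced triangle by triangle by the congruence modulo $3$ together with the bound $\le 1$.

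\textbf{What your route buys.} It is self-contained: it effectively reproves the lifting rather than citing it, and it does not need the paper's unproved claim that a roof is a dicut. It also proves the direction from a feasible $h$ to a tiling explicitly, whereas the paper only derives the constraints from a dicut and merely asserts the reverse. Most usefully, it shows that the statement as written needs your normalisation. For \texttt{Tiling$(\varhexagon_1,\emptyset,\emptyset)$}, the centre vertex admits four feasible values modulo the constant, but there are only two tilings. In the paper this normalisation is hidden in the fact that all cube heights over $v$ lie in the class $c(v)$ modulo $3$.

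\textbf{What the paper's route buys.} It gives the dicut picture in the periodic cube graph. That is exactly the object the advancing surface algorithm grows, so Theorem~\ref{wall} rests on it rather than on a planar argument like yours.

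\textbf{The normalisation costs nothing algorithmically.} Every edge $u\to v$ of $DC(R,X_1,X_2)$ has weight congruent to $c(v)-c(u)$ modulo $3$. Hence Bellman--Ford distances satisfy your congruence automatically, up to the constant $c(s)$. Moreover, rounding any feasible $h$ down to the class $c(v)$ preserves feasibility. So the feasibility corollaries hold as stated.
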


The tiling is obtained from the integer-valued function $h$ by linking the adjacent vertices of $R$ whose value $f$ differ from $1$ (as illustrated in Figs.~\ref{end} or \ref{start}).

Classically, systems of difference constraints are solved by shortest-path
algorithms such as Bellman--Ford \cite{intro}.
If the graph contains a directed cycle of strictly negative total weight,
then the system has no feasible solution.
Otherwise, shortest-path distances from an arbitrary source yield
a valid solution of the constraints.

This dichotomy is illustrated in Fig.~\ref{start}.

\begin{figure}[ht]
  \begin{center}
		\includegraphics[width=\textwidth]{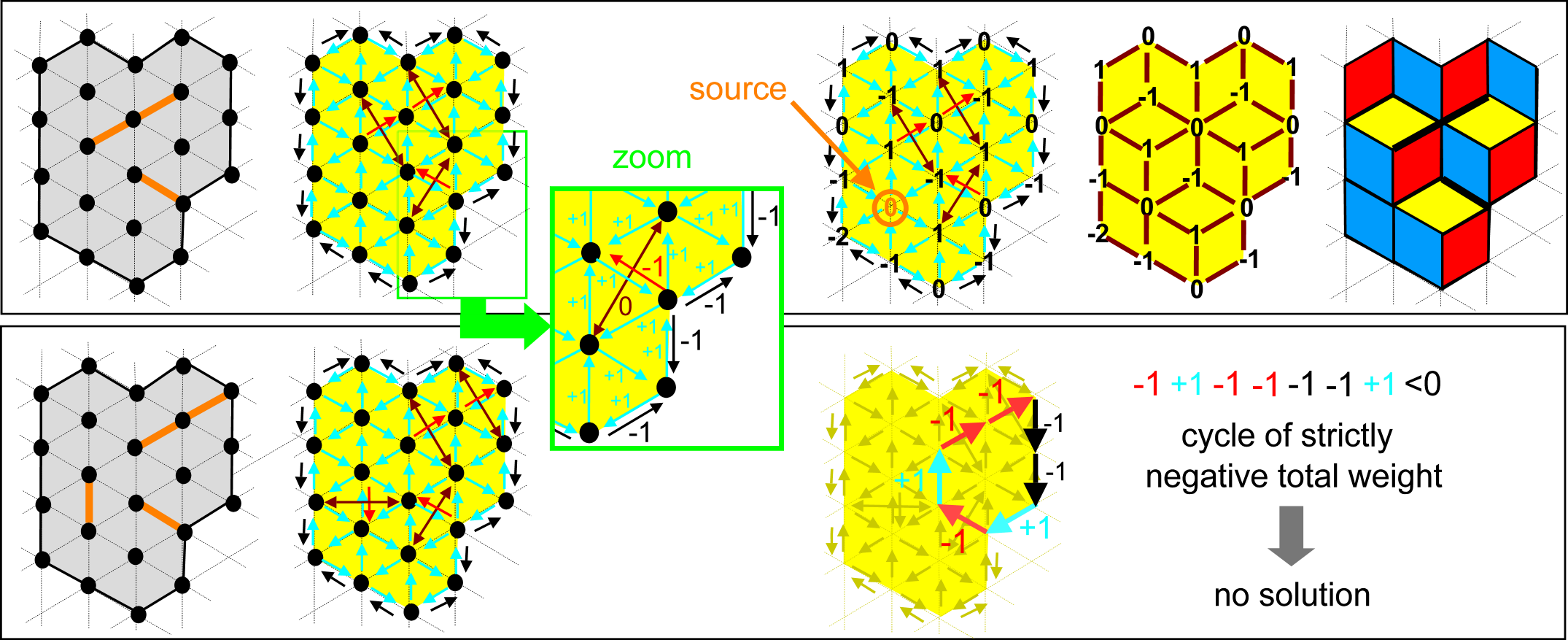}
	\caption{\label{start} \textbf{Tilability characterization.}
	The tiling instance \texttt{Tiling$(R,X_1,X_2)$} is feasible if and only if
	the directed weighted graph $DC(R,X_1,X_2)$ contains no cycle of strictly negative total weight
	(also called an \emph{absorbing cycle}).
	Top: an instance without absorbing cycle. The distances computed by the
	Bellman-Ford algorithm yield a valid tiling.
	Bottom: an instance with a negative cycle. The tiling problem has no solution.}
 \end{center}
\end{figure}

As a direct consequence of Theorem~\ref{thm:dc}, we obtain the following corollary.

\begin{corollary}
For any finite simply connected region $R$,
the tiling problem \texttt{Tiling$(R,X_1,X_2)$}
can be solved in time $O(|R|^2)$
by applying the Bellman--Ford algorithm to the directed graph
$DC(R,X_1,X_2)$.
\end{corollary}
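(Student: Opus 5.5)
The corollary follows from Theorem~\ref{thm:dc} once we bound the size of $DC(R,X_1,X_2)$ and the cost of each step of the algorithm. The plan has three parts: build the graph, run Bellman--Ford, and read off the answer.

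\textbf{Size of the graph.} The vertex set of $DC(R,X_1,X_2)$ is $R^0$. Since $R$ is a finite simply connected region of the triangular grid, $|R^0|$, $|R^1|$ and $|R^2|$ are all $O(|R|)$. By construction:
\begin{itemize}
\item each edge of $R^1$ contributes one positive edge of weight $+1$;
\item each edge of $X=\partial R\cup X_1\cup X_2$ contributes at most one negative edge of weight $-1$;
\item each edge of $X_2$ contributes two lateral edges of weight $0$.
\end{itemize}
We may assume $X_1, X_2\subseteq R^1$, because constraints outside $R$ are irrelevant and can be discarded in linear time. Under this assumption the graph has $V=O(|R|)$ vertices and $E=O(|R|)$ edges, and it can be built in $O(|R|)$ time by scanning $R^1$ and $X$.

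\textbf{Running Bellman--Ford.} Add a source $s$ joined to every vertex by an edge of weight $0$, or equivalently start from an arbitrary vertex, since $R$ is connected and the positive edges alone strongly connect $R^0$. Bellman--Ford then runs in $O(VE)=O(|R|^2)$ time. It either reports a cycle of strictly negative total weight, or returns distances $d(v)$ that satisfy every constraint $d(v)-d(u)\le w$. These distances are integers because all weights are integers. By the standard theory of difference constraints (CLRS, Section~24.4), a negative cycle certifies that no function $h$, in particular no integer-valued one, satisfies the system. Otherwise $h=d$ is an integer solution.

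\textbf{Reading off the answer.} Theorem~\ref{thm:dc} converts the dichotomy into the tiling statement. In the infeasible case no tiling solves \texttt{Tiling$(R,X_1,X_2)$}. In the feasible case $h$ corresponds to a valid tiling, which we output by joining adjacent vertices whose values of $h$ differ by exactly $1$, as described after the theorem. This takes $O(|R|)$ time.

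The only point needing care is that the graph has $O(|R|)$ edges, so that $O(VE)$ is really $O(|R|^2)$ rather than something larger. This holds because every added edge is charged to a grid edge of $R$ or $X$, and each grid edge receives at most three such charges.
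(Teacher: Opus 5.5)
Your size count and the $O(VE)=O(|R|^2)$ bound match the paper's argument, and the negative-cycle test is correct whichever source you use. The step that fails is your claim that a super-source with $0$-weight edges is \emph{equivalent} to starting from a vertex when you read off the tiling.

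Theorem~\ref{thm:dc}, read as a bijection with \emph{all} integer solutions, says more than the dicut argument gives. The cubes above $v=\varphi(x,y,z)$ have heights in $x+y+z+3\mathbb{Z}$. So a height function coming from a dicut satisfies $h(v)\equiv x+y+z+c \pmod 3$ for a single global $c$, and the rule ``link $u,v$ when $|h(u)-h(v)|=1$'' is only justified for such $h$. For instance, on the hexagon of side $1$ the central value can take four values relative to the fixed boundary, yet there are only two tilings.

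Distances from a vertex $s$ do satisfy this congruence. Every edge of $DC(R,X_1,X_2)$ has weight congruent mod $3$ to the change of $x+y+z$ along it: $\pm1$ on grid edges, and $0$ on lateral edges. For the edge from $\varphi(x,y,z)$ to $\varphi(x,y,z+1)$, the lateral endpoints $\varphi(x-1,y,z)$ and $\varphi(x,y-1,z)$ have equal coordinate sums.

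Super-source distances need not satisfy it, because they form the pointwise largest feasible $h$ with $h\le 0$. Take the hexagon of side $2$ with $X_1=X_2=\emptyset$. There this $h$ equals $x+y+z-4\in\{-2,-1,0\}$ on the twelve boundary vertices and $0$ on all seven interior vertices. This is feasible: interior-to-interior differences are $0$, and every positive edge from the boundary into the interior starts at a vertex of value $-1$ or $0$. No edge around the center is then linked, so the output is not a tiling.

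The repair is short; either option below works.
\begin{itemize}
\item Keep the single vertex source and add the congruence observation above. Your strong-connectivity remark (each triangle is a positive $3$-cycle) already guarantees that every negative cycle is reachable from it.
\item After a super-source run, replace each $h(v)$ by the largest integer $\le h(v)$ congruent to $x+y+z$ mod $3$. This rounding preserves feasibility: for an edge $u\to v$ of weight $w$, the new difference is at most $w+2$ and congruent to $w$ mod $3$, hence at most $w$.
\end{itemize}
The paper's own recipe takes an arbitrary vertex as source, which is the variant that works, although it does not say why.
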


We conclude this section by considering the decision problem of tiling the
entire triangular grid $\triangle$ with a finite set of constraints
$X_1$ and $X_2$.

\begin{corollary}\label{col}
Let $X_1$ and $X_2$ be finite sets of edges.
The feasibility of \texttt{Tiling$(\triangle,X_1,X_2)$}
can be decided by solving a finite system of difference constraints derived
from $DC(\triangle,X_1,X_2)$,
in time $O(|X_1 \cup X_2|^3)$.
\end{corollary}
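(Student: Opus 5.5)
We plan to derive Corollary~\ref{col} from Theorem~\ref{thm:dc} applied with $R=\triangle$. Since $\partial\triangle=\emptyset$, the graph $DC(\triangle,X_1,X_2)$ has a $+1$ edge for every positively oriented grid edge, and the only non-positive edges come from $X=X_1\cup X_2$: one $-1$ edge per constrained edge and two $0$ lateral edges per edge of $X_2$. By Theorem~\ref{thm:dc} together with the standard theory of difference constraints, $\texttt{Tiling}(\triangle,X_1,X_2)$ is feasible iff this infinite graph has no cycle of strictly negative total weight. (For an infinite graph we define potentials as infimum path distances from a source. Absence of negative cycles makes them finite, because any two vertices are joined by a positive path, which will be checked below.) The task is therefore to replace the infinite graph by a finite weighted graph on $O(|X|)$ vertices with the same answer, and then to run Bellman--Ford or Floyd--Warshall on it in $O(|X|^3)$.

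\textbf{Compression.} Let $T$ be the finite set of endpoints of edges in $X$ and of lateral edges. Any negative cycle must use at least one non-positive edge, because a cycle made only of $+1$ edges has positive weight. Cut such a cycle at its non-positive edges. It then decomposes into non-positive edges, all with endpoints in $T$, and maximal subpaths using only $+1$ edges between two vertices $a,b\in T$. Replacing each such subpath by a shortest one does not increase the weight. So we will build the finite graph $G_T$ on vertex set $T$ whose edges are all non-positive edges of $DC$, together with, for each ordered pair $(a,b)$, an edge of weight $d^+(a,b)$. Here $d^+(a,b)$ is the length of a shortest path from $a$ to $b$ using only positive directions (12, 4 and 8 o'clock). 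The infinite graph then has a negative cycle iff $G_T$ does, since every edge of $G_T$ is realized by a walk of the same weight in $DC$.

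\textbf{Computing $d^+$.} We need a closed form. The three positive unit vectors $u_1,u_2,u_3$ satisfy $u_1+u_2+u_3=0$, so they positively span the plane and every displacement is reachable. Writing the displacement $b-a=\alpha u_1+\beta u_2+\gamma u_3$ with nonnegative integer coefficients, the representation is unique up to adding $k(1,1,1)$. The minimum of $\alpha+\beta+\gamma$ is attained when $\min(\alpha,\beta,\gamma)=0$, which gives $d^+$ in $O(1)$ from lattice coordinates. Thus $G_T$ has $|T|=O(|X|)$ vertices and $O(|X|^2)$ edges. Negative-cycle detection by Floyd--Warshall, or by Bellman--Ford from a virtual source, takes $O(|X|^3)$ time, and building the graph costs $O(|X|^2)$.

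\textbf{Main obstacle.} The subtle step is justifying Theorem~\ref{thm:dc} and the negative-cycle criterion when the region is infinite. Theorem~\ref{thm:dc} is stated for general simply connected regions, which includes $\triangle$, so the correspondence itself is given. What remains is the converse: if $G_T$ has no negative cycle, a potential $h$ exists on all of $\triangle$. We would define $h(v)=\min_{t\in T}\big(\mathrm{dist}_{G_T}(s,t)+d^+(t,v)\big)$, or equivalently the infimum path distance from a fixed $s\in T$ in $DC$. This value is finite because every cycle has nonnegative weight, and walks from $s$ to $v$ reduce to walks through $T$ followed by a positive path. It is then routine to check that $h$ satisfies every constraint: the $+1$ constraints hold by the triangle inequality, and the non-positive constraints only involve vertices of $T$, where $h$ agrees with shortest distances.
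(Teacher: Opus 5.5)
Your proposal is correct and follows the paper's proof. Your $G_T$ is exactly the paper's finite graph $G^{+-}$: the non-positive edges of $DC(\triangle,X_1,X_2)$ plus all-pairs edges weighted by closed-form distances in the positive graph $G^+$. You use the same compression of maximal positive subpaths and reach the same $O(|X_1\cup X_2|^3)$ Bellman--Ford bound. Your explicit construction of a potential on all of $\triangle$, via $h(v)=\min_{t\in T}(\mathrm{dist}_{G_T}(s,t)+d^+(t,v))$ or via finiteness of infimum distances using positive return paths, is a genuine addition: the paper's Claim only transfers negative cycles between the two graphs, and it tacitly applies the finite-graph feasibility criterion to the infinite system.
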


Due to the absence of boundary conditions and the infinite number of tiles,
this last problem cannot be addressed by classical approaches such as
matching-based methods, Thurston’s original algorithm, SAT encodings,
or maximum independent set formulations.
The corollary \ref{col} illustrates that the difference constraints formulation
induced by $DC(R,X_1,X_2)$ is not a mere technical rewriting of Thurston’s theory,
but rather an essential complement that clarifies the algorithmic theory of
lozenge tilings and enables the solution of tiling problems beyond the reach
of standard methods.

\section{Why it Works}

The goal of this section is to prove Theorem~\ref{thm:dc} by showing that solving the generic tiling problem
\texttt{Tiling$(R,X_1,X_2)$} is equivalent to solving the system of difference constraints induced by the weighted directed graph $DC(R,X_1,X_2)$.

This equivalence is obtained through a sequence of transformations that can be summarized as $
\textit{Tiling} \;\longrightarrow\; \textit{Roof} \;\longrightarrow\; \textit{Directed cut} \;\longrightarrow\; \textit{Height function} \;\longrightarrow\; \textit{Difference constraints}$
Each transformation is the purpose of a dedicated subsection, and an additional subsection is devoted to the treatment of unbreakable edges encoding non-overlapping and saliency constraints. Thurston’s original approach follows the shorter path
$\textit{Tiling} \;\longrightarrow\; \textit{Roof} \;\longrightarrow\; \textit{Height function}
$
which is sufficient for simply connected regions without interior constraints.
Our contribution is to make explicit the intermediate graph-theoretic structure of directed cuts,
which provides a complete and uniform framework before passing to height functions and difference constraints.
This additional layer makes it possible to incorporate interior constraints and to rely on classical
algorithmic tools from Theoretical Computer Science.

\subsection{Notations}

We first introduce a few notations.

\subparagraph{The Grids $\square$, $\triangle$, and the Projection $\varphi$.}
The \emph{primary unit cube} is $[0,1]^3$.
The cubes of the cubic grid are denoted $(x,y,z)+[0,1]^3$, with $(x,y,z) \in \mathbb{Z}^3$ (see Fig.~\ref{grid}). 
The sets of cubes, faces, edges, and vertices of the cubic grid are respectively denoted
$\square^3$, $\square^2$, $\square^1$, and $\square^0$.
Throughout the paper, we follow the convention that a superscript $k \in \{0,1,2,3\}$ indicates the dimension of the objects in the corresponding set.

The infinite triangular grid $\triangle$ is obtained by projecting the edges of $\square^1$ using a map $\varphi$.
More precisely, $\varphi$ denotes the projection of the three-dimensional space $\mathbb{R}^3$ onto a plane $H$
of equation $x+y+z=h$, along the direction $\mathds{1}=(1,1,1)$.
The plane $H$ is naturally decomposed into a simplicial complex
$\triangle = \triangle^2 \cup \triangle^1 \cup \triangle^0$ consisting of triangles, edges, and vertices.

Any vertex of $\triangle^0$ is incident to six edges.
Their directions are given by $\varphi(1,0,0)$, $\varphi(0,1,0)$, $\varphi(0,0,1)$, and their opposites.
Rather than using two-dimensional coordinates in the plane $H$, we use \emph{homogeneous coordinates}:
a point of $H$ is represented as $\varphi(x,y,z)$ with $(x,y,z) \in \mathbb{R}^3$.
Clearly, $\varphi(x,y,z) = \varphi(x+k,y+k,z+k)$ for any $k \in \mathbb{R}$.
Adding such a constant changes the depth in the $\mathds{1}$ direction without affecting the projection.

This notion of depth was introduced by W.~Thurston under the name of \emph{height}.
We adopt this terminology and define the height as $x+y+z$, equivalently as the depth in the $\mathds{1}$ direction.

\begin{figure}[ht]
\begin{center}
    \includegraphics[width=0.50\textwidth]{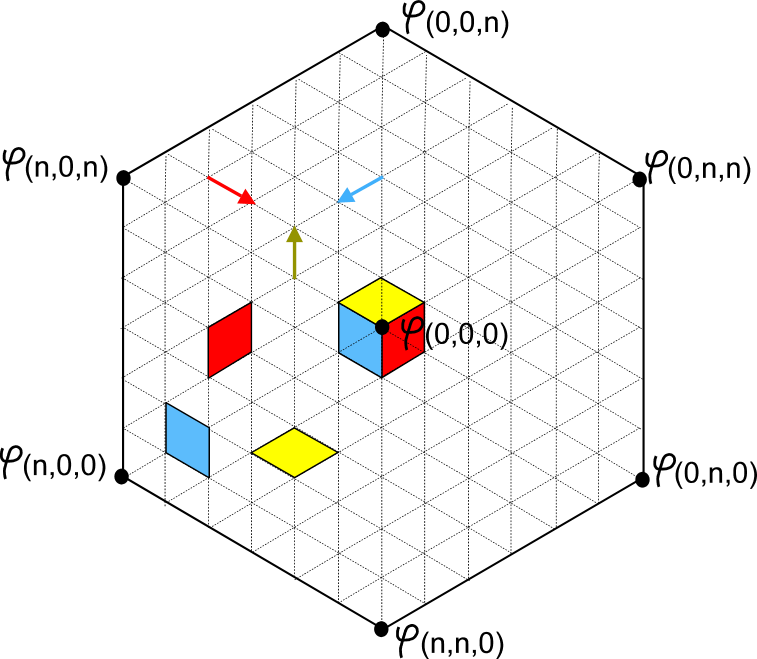}
\end{center}
\caption{\label{grid}
\textbf{The hexagonal region $\varhexagon_n$ of the triangular grid $\triangle$}
used in Calisson puzzles of size $n=6$.
The vertex $\varphi(0,0,0)$ is at the center, surrounded by the projection $\varphi(C)$
of the primary cube $C=[0,1]^3$.
We also show three edges of $\triangle^1$ in the directions
$\varphi(1,0,0)$, $\varphi(0,1,0)$, and $\varphi(0,0,1)$,
as well as a yellow, a red, and a blue lozenge.}
\end{figure}

\subparagraph{The Lozenges.}
Our two-dimensional tiles are lozenges.
A \emph{lozenge} is defined as the projection under $\varphi$ of a two-dimensional face of the cubic grid $\square$.
Since the faces of $\square$ have three possible orientations,
there are exactly three types of lozenges.
Blue, red, and yellow lozenges are respectively the projections of square faces
with normal directions $(1,0,0)$, $(0,1,0)$, and $(0,0,1)$.
\subsection{From Tilings to Roofs (Thurston's Theorem)}

William Thurston showed that any lozenge tiling $T$ of a simply connected region $R$
can be lifted through $\varphi^{-1}$ to a monotone stepped surface of the cubic grid $\square$,
where monotonicity is considered with respect to the direction $\mathds{1}$.
This surface is unique up to translations in the $\mathds{1}$ direction.
By fixing the height of a single point, the lifting $\varphi^{-1}$ is therefore uniquely defined.
In~\cite{Thurston,Thurston2}, this surface $\varphi^{-1}(T)$ is called the \emph{roof} of the tiling $T$.

\begin{theorem}[Thurston]\label{thurstontheorem}
Any lozenge tiling $T$ of a simply connected region $R$ of the triangular grid $\triangle$
can be lifted to a surface of the cubical $2$-complex, denoted $\varphi^{-1}(T)$ and called the roof of $T$,
such that the projections of the square faces of the roof are exactly the lozenges of $T$.
\end{theorem}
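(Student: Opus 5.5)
The plan is to build the roof by first constructing a height function on the vertices of $T$, and then lifting each vertex to the cubic lattice. Fix a tiling $T$ of the simply connected region $R$. Each edge of $\triangle^1$ is the projection of a cube edge in one of the directions $\pm(1,0,0),\pm(0,1,0),\pm(0,0,1)$. Crossing such an edge in the positive direction raises $x+y+z$ by $+1$, and crossing it in the negative direction lowers it by $1$.

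Let $T^1$ be the union of the edges of the lozenges of $T$, viewed as a graph on $R^0$. Orient every edge of $T^1$ and give it the increment $+1$ if it is positive and $-1$ if it is negative. Fix a base vertex $v_0$ and set $h(v_0)=0$. Define $h(v)$ as the sum of the increments along any path from $v_0$ to $v$ in $T^1$.

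The main step is to show that $h$ is well defined, i.e. that every closed path in $T^1$ has zero total increment. Since $R$ is simply connected and tiled by $T$, the cycle space of the planar graph $T^1$ is generated by the boundaries of its bounded faces. Those faces are exactly the lozenges of $T$. A lozenge boundary is the projection of a unit square of the cubic grid, so its boundary consists of steps $+a,+b,-a,-b$ in two of the three directions, and its total increment is $0$. Hence every cycle has increment zero.

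This is the one place where simple connectivity is used, and it is the step I expect to be the main obstacle. One has to argue cleanly that the lozenges generate all cycles of $T^1$. I would do this by noting that $T^1$ is a planar graph whose embedding fills a disc, and then applying the standard fact (via Euler's formula, or by induction removing one lozenge at a time from the boundary) that face boundaries span the cycle space.

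With $h$ in hand, lift each vertex $v=\varphi(p)$ to the unique point $\tilde v\in\mathbb{Z}^3$ with $\varphi(\tilde v)=v$ and coordinate sum equal to $h(v)$. This point exists because the homogeneous coordinates of $v$ are determined up to adding $k\mathds{1}$, and adding $k\mathds{1}$ shifts the sum by $3k$. A lattice representative with sum $\equiv h(v) \pmod 3$ exists because every step along an edge changes the sum by $\pm1$ consistently with $h$. Equivalently, I will define the lift directly by path integration of the actual lattice steps $\pm e_i$, which gives the same point.

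For each lozenge $L\in T$ with vertices $v_1,\dots,v_4$, the lifted points $\tilde v_1,\dots,\tilde v_4$ differ exactly by the unit steps realising its edges. They are therefore the corners of a unit square face $\tilde L$ of $\square^2$ with $\varphi(\tilde L)=L$. Adjacent lozenges share lifted edges, because the lift is defined vertexwise. So the union of the $\tilde L$ is a surface in the cubical $2$-complex whose square faces project bijectively onto the lozenges of $T$. This surface is the required roof $\varphi^{-1}(T)$, unique up to the choice of $h(v_0)$, i.e. up to translation along $\mathds{1}$.
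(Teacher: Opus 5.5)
Your proof is correct. The paper gives no proof of its own for this statement; it cites Thurston. Your argument is the classical height-function proof, and the height the paper uses immediately afterwards (the coordinate sum $x+y+z$ of the roof point above $v$) is exactly your $h$. Three details should be tightened.

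(i) The normalization $h(v_0)=0$ is not always compatible with the lift. The lattice preimages of a vertex $v=\varphi(p)$ are the points $p+k\mathds{1}$ with $k\in\mathbb{Z}$, and their coordinate sums all lie in one residue class modulo $3$. For instance, if $v_0=\varphi(1,0,0)$, no lattice point above $v_0$ has sum $0$, and your claim that the path-integrated lift ``gives the same point'' fails. Take $h(v_0)$ to be the coordinate sum of a chosen lift $\tilde v_0$. Your mod-$3$ consistency then propagates along edges, and the scalar description agrees with path-integrating the steps $\pm e_i$ from $\tilde v_0$. The path-integration version is the cleaner definition, and your lozenge-boundary argument already shows that the vector sum of steps around each cycle is $0$.

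(ii) The increments are integers, so you need the lozenge boundaries to generate closed walks over $\mathbb{Z}$, not merely over $\mathbb{Z}/2$. Also, the paper allows unbounded simply connected regions such as $R=\triangle$, where Euler's formula or peeling cannot be applied to $R$ globally. Apply it instead to the finite disc bounded by each simple cycle, into which any closed walk decomposes. That disc lies in $R$ by simple connectivity and is a finite union of lozenges of $T$. Alternatively, use $H_1(R;\mathbb{Z})=0$ for the cell structure on $R$ given by $T$.

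(iii) State explicitly that your lift is a continuous section of $\varphi$ over $R$. This makes the union of the squares homeomorphic to $R$, hence a surface. It also gives, for free, that each line parallel to $\mathds{1}$ above $R$ meets the roof exactly once. That is the monotonicity the paper relies on in the next subsection, when it reads the roof as a dicut of $\H_R$.
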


Moreover, the height of a vertex $v \in R^0$ is  defined as the value $x+y+z$
of the unique point in $\varphi^{-1}(v)$ that belongs to the roof $\varphi^{-1}(T)$.
This height is defined up to an additive constant,
but height differences are unambiguous and play a central role in what follows.

\begin{figure}[ht]
  \begin{center}		\includegraphics[width=0.25\textwidth]{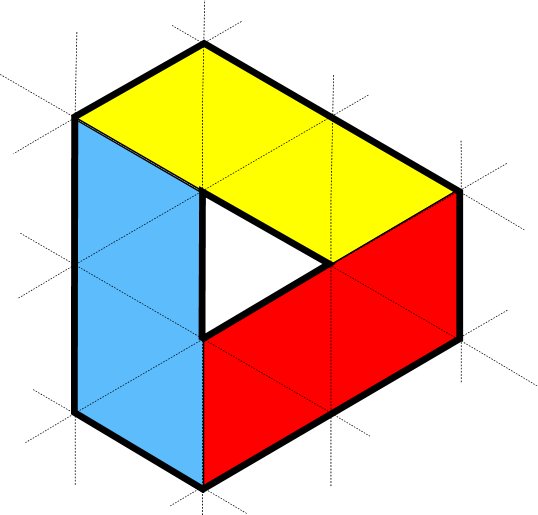}
	\end{center}
	\caption{\label{excluded} \textbf{An excluded region}. The regions with holes are excluded from Thurston theorem because they admit tilings which can not be lifted in monotone surfaces of the cubic complex $\square$.
    }
\end{figure}
\vspace{-0.2cm}

\subsection{From Roofs to Directed Cuts}

At this point, we depart from the standard geometric viewpoint introduced by Thurston to adopt a graph Theoretical Computer Science framework (directed cuts and difference constraints systems) that is best suited to address
the algorithmic questions arising in the computation of lozenge tilings with local constraints.

\subparagraph{The Ascendant Graph $\H_R$ of Cubes and Its Projection.}

With our notation, each cube $(x,y,z)+[0,1]^3$ of $\square^3$ is identified with its base point $(x,y,z)$,
yielding a natural one-to-one correspondence between $\square^3$ and $\mathbb{Z}^3$.
We define $\square_R^3 \subset \square^3$ as the set of cubes whose base points project into the region $R$,
that is, such that $\varphi(x,y,z) \in R$.
Equivalently, $\square_R^3$ consists of infinite stacks of cubes $\varphi ^{-1} (\varphi(x,y,z))$ above the vertices $\varphi(x,y,z) \in R$
in the direction $\mathds{1}$.

We endow $\square_R^3$ with a directed graph structure, called the \emph{ascendant graph} and denoted
$\H_R=(\square_R^3,\wedge_R)$.
There is a directed edge $(x,y,z)+[0,1]^3 \;\rightarrow\; (x',y',z')+[0,1]^3$
in $\wedge_R$ if and only if the following two conditions hold:
\begin{itemize}
    \item[(a)] the cube $(x',y',z')+[0,1]^3$ is one of the three cubes $(x+1,y,z)+[0,1]^3$, $(x,y+1,z)+[0,1]^3$, or $(x,y,z+1)+[0,1]^3$.
    In other words, edges of $\H_R$ go from a cube to one of its three face-adjacent cubes
    of height exactly one unit higher.
    \item[(b)] The projection of this edge by $\varphi$,
    $\varphi(x,y,z)\rightarrow\varphi(x',y',z')$,
    is an edge of the triangular grid belonging to the region,
    that is, $\varphi(e)\in \triangle_R^1$.
\end{itemize}

When the boundary of the region folds back onto itself,
Condition~(b) prevents the introduction of spurious edges.

By construction, the edges of the ascendant graph $\H_R$ correspond to square faces of the cubic grid
whose projections by $\varphi$ are either entirely contained in $R$
or overlap boundary edges of $R$.
This observation will play a central role in the sequel.

\begin{remark}\label{remark}
The projection of the adjacent square face/edge of $\H_R$ from
$(x,y,z)+[0,1]^3$ to $(x+1,y,z)+[0,1]^3$
is precisely the lozenge overlapping the edge
$\varphi(x,y,z),\varphi(x+1,y,z)$ of the triangular grid.
The same remark holds for the two other coordinate directions.
\end{remark}

The ascendant graph $\H_R$ is acyclic, hence a directed acyclic graph (DAG).
Moreover, it is invariant under translations by vectors of the form $(k,k,k)$ with $k\in\mathbb{Z}$,
and therefore admits a $\mathbb{Z}$-periodic structure.
This periodicity is essential and will be exploited later.

\subparagraph{Roofs as Directed Cuts of the Ascendant Graph.}
We now return to the roof associated with a lozenge tiling $T$ of a simply connected region $R$
and interpret it as a directed cut of the ascendant graph $\H_R$.

Each square face of the roof is adjacent to a pair of cubes
$(x,y,z)+[0,1]^3$ and $(x',y',z')+[0,1]^3$,
where $(x',y',z')$ equals one of $(x+1,y,z)$, $(x,y+1,z)$, or $(x,y,z+1)$.
Since the projection of this square face by $\varphi$ lies in the region $R$,
both cubes belong to $\square_R^3$ and therefore define an edge of $\H_R$.
It follows that the square faces of the roof are a subset of edges of the ascendant graph. This subset has an important property.

We now introduce the notion of a \emph{directed cut} (or \emph{dicut}) of a directed graph.
A dicut is a partition of the vertex set into two non empty subsets that we denote $\bottom$ and $\top$
such that all edges crossing the cut are directed from $\bottom$ to $\top$. The difference between a simple cut and a dicut is illustrated in Fig.~\ref{dicut}.

\begin{figure}[ht]
  \begin{center}		\includegraphics[width=0.80\textwidth]{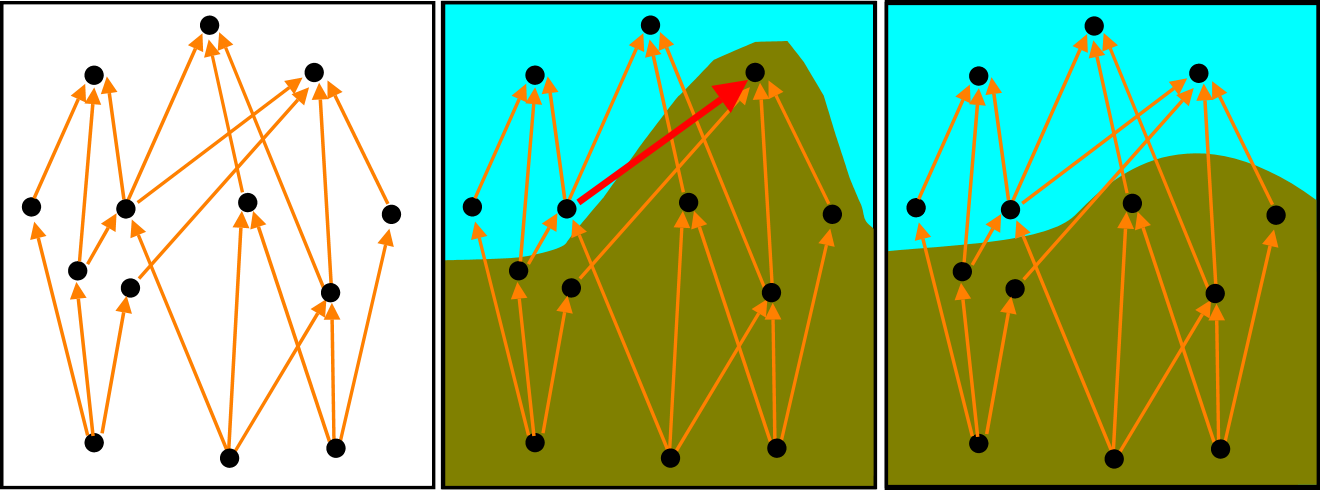}
	\end{center}
	\caption{\label{dicut} \textbf{Cuts and dicuts.} On the left, a directed graph. In the middle, a cut is drawn. It is not a dicut because the edges from one part to the other are not uniformly directed due to the red directed edge. On the right, a dicut with $\bottom$ in green and $\top$ in blue.}
\end{figure}
One of the properties that we use in the following is that a directed path in a directed graph cannot be cut twice by a dicut.

The roof of the tiling $T$ being a monotone surface, it provides a partition of the cubes of $\square _R$ which are above and below it. This partition is a dicut of $\H_R$. 

\begin{claim}
The roof of a tiling $T$ of a simply connected region $R$ defines a dicut of the ascendant graph $\H_R$.
\end{claim}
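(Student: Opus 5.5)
The plan is to use the roof $\varphi^{-1}(T)$ given by Theorem~\ref{thurstontheorem}. Fix its vertical position and, for each vertex $v\in R^0$, let $h_T(v)$ be the height of the unique roof point above $v$. Each stack $\varphi^{-1}(v)$ of cubes of $\square_R^3$ is the chain of cubes $(x,y,z)+k\mathds{1}+[0,1]^3$, $k\in\mathbb{Z}$, whose base points have heights $x+y+z+3k$. The roof meets the vertical line through $v$ exactly once, so it splits each stack into a lower half and an upper half. I will define $\bottom$ as the set of cubes lying below the roof, meaning that the base point has height strictly less than $h_T(\varphi(\text{base}))$ in the appropriate sense, and $\top$ as the rest. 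More concretely, a cube $c$ is placed in $\bottom$ when the segment of $\varphi^{-1}(\varphi(c))$ through its base point lies below the roof point. Both parts are nonempty, because every stack is infinite in both directions.

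Next I check that the partition is a dicut. Take an edge $c\to c'$ of $\H_R$, with $c'=c+e_i$. The face separating $c$ and $c'$ projects, by Remark~\ref{remark}, onto the lozenge overlapping the edge $\varphi(c)\varphi(c')$, and that lozenge lies in $R$ by condition~(b). The key step is monotonicity: the roof is a monotone stepped surface with respect to $\mathds{1}$, and the region strictly below it, $\{p: p+t\mathds{1}\text{ lies on the roof for some } t>0\}$, is closed under subtracting the nonnegative vectors $e_1,e_2,e_3$ locally over $R$. Equivalently, the region above the roof is closed under adding $e_i$. So if $c\in\top$, then $c'=c+e_i\in\top$. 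This shows that no edge goes from $\top$ to $\bottom$, and that every edge crossing the cut goes from $\bottom$ to $\top$.

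I would prove that up-closure through the height function. For a roof point $p$ above $v$ and a grid neighbour $v'=\varphi(p+e_i)$, the roof height at $v'$ is at most $h_T(v)+1$. This holds because along the edge $vv'$ the roof either contains the cube edge in direction $e_i$, giving a difference of $+1$, or contains the opposite path of two edges $-e_j,-e_k$, giving $h=-2$. In both cases $h_T(v')-h_T(v)\in\{1,-2\}\le 1$. Now if $c$ is at or above the roof at $v$, then the base point of $c+e_i$ has height at least $h_T(v)+1\ge h_T(v')$, so $c'$ lies in $\top$.

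The main obstacle is making ``above/below the roof'' precise cube by cube. A cube's base point, rather than the cube's interior, has to be compared with the roof height, and a lozenge tiling forces the roof heights above a vertex to be congruent to a fixed residue mod $3$ along each stack. I would handle this by choosing the convention that a cube is in $\bottom$ iff its top vertex $(x+1,y+1,z+1)$ has height $\le h_T(\varphi(x,y,z))$. This amounts to checking that the local height differences along each edge take values in $\{+1,-2\}$, which follows from the tiling covering the edge either by a lozenge or by having it as a lozenge side.
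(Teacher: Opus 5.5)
Your proposal is correct and follows the paper's approach. The paper only asserts that the monotone roof splits $\square_R^3$ into the cubes below and above it and that this partition is a dicut; your bound $h_T(v')-h_T(v)\in\{1,-2\}\le 1$ along every positive edge lying in $R$ supplies the up-closure step the paper leaves implicit. Your two definitions of $\bottom$ (base point strictly below the roof point, versus top vertex at height at most $h_T(v)$) coincide, because the base heights in the stack over $v$ are $h_T(v)+3k$, so the write-up can use just one of them.
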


At this stage, the correspondence is only one-way.
While the roof of a tiling of $R$ always induces a directed cut of the ascendant graph $\H_R$,
the converse is not true in general:
there exist dicuts of $\H_R$ whose projections
(the projections of the square faces separating the two sides of the cut)
do not form a valid tiling of the region $R$,
because some projected lozenges may overlap boundary edges of $R$
or violate interior constraints.

From the algorithmic perspective of solving an instance of
\texttt{Tiling$(R,X_1,X_2)$},
the central question therefore becomes the following:
\emph{under which conditions does the projection of a dicut of the ascendant graph $\H_R$
yield a valid tiling of the region $R$ satisfying the non overlapping and saliency constraints
encoded by $X_1$ and $X_2$?}

\subsection{Encoding the Local Constraints by Unbreakable Edges}

We arrive at the step where the dicut perspective is essential.
We express  the constraints that a dicut must satisfy for providing a valid solution of \texttt{Tiling$(R,X_1,X_2)$} by introducing  the notion of \emph{unbreakable edges}.
Certain edges of the ascendant graph must not be crossed by a dicut,
because cutting them would produce projected lozenges
that violate the tiling constraints.
We now explain how the non overlapping and saliency constraints
translate into unbreakable edges in $\H_R$.

\subparagraph{Non Overlapping Constraints.}

The first type of constraint enforces that no lozenge overlaps
either a boundary edge of $\partial R$
or an interior edge belonging to $X=X_1\cup X_2$.
Consider, for instance, an edge of the triangular grid
between the vertices $\varphi(x,y,z)$ and $\varphi(x,y,z+1)$
that must not be overlapped by a lozenge.
By Remark~\ref{remark}, the lozenge overlapping this edge
is exactly the projection of the square face separating the cubes
$(x,y,z)+[0,1]^3$ and $(x,y,z+1)+[0,1]^3$.

Therefore, forbidding this overlap is equivalent to requiring
that the directed edge of $\H_R$ from
$(x,y,z)+[0,1]^3$ to $(x,y,z+1)+[0,1]^3$
does not belong to the dicut.
The same restriction applies to all translated pairs
$
(x+k,y+k,z+k)+[0,1]^3$ and $(x+k,y+k,z+k+1)+[0,1]^3$ for any $k\in\mathbb{Z}
$.
In other words, all directed edges between these pairs of cubes
are declared \emph{unbreakable}.
Conversely, if a directed cut does not cut any of these unbreakable edges,
then the corresponding lozenge overlapping the edge
$\varphi(x,y,z),\varphi(x,y,z+1)$
cannot appear in the projection. Equivalent claims hold for the two other directions. It shows that by declaring unbreakable some pairs of cubes $
(x+k,y+k,z+k)+[0,1]^3$ and $(x+k,y+k,z+k+1)+[0,1]^3$ for any $k\in\mathbb{Z}
$, we exactly encode the non overlapping constraints.

\subparagraph{Saliency Constraints.}

We now consider the saliency constraints imposed by the edges of $X_2$.
Assume that the edge $e$ between the vertices
$\varphi(x,y,z)$ and $\varphi(x,y,z+1)$ belongs to $X_2$.
The saliency constraint requires that the two lozenges adjacent to $e$
have distinct orientations.

To encode this condition in the ascendant graph,
we introduce four families of cubes for each integer $k$:
\[
\begin{aligned}
L_k &= (x+k,\,y+k-1,\,z+k)+[0,1]^3,\\
R_k &= (x+k-1,\,y+k,\,z+k)+[0,1]^3,\\
F_k &= (x+k,\,y+k,\,z+k)+[0,1]^3,\\
B_k &= (x+k-1,\,y+k-1,\,z+k)+[0,1]^3,
\end{aligned}
\]
as illustrated in Fig.~\ref{LRBF}.
Each of these cubes has a face whose projection is a lozenge
adjacent to or overlapping the edge $e$.

\begin{figure}[tb]
  \begin{center}
		\includegraphics[width=0.55\textwidth]{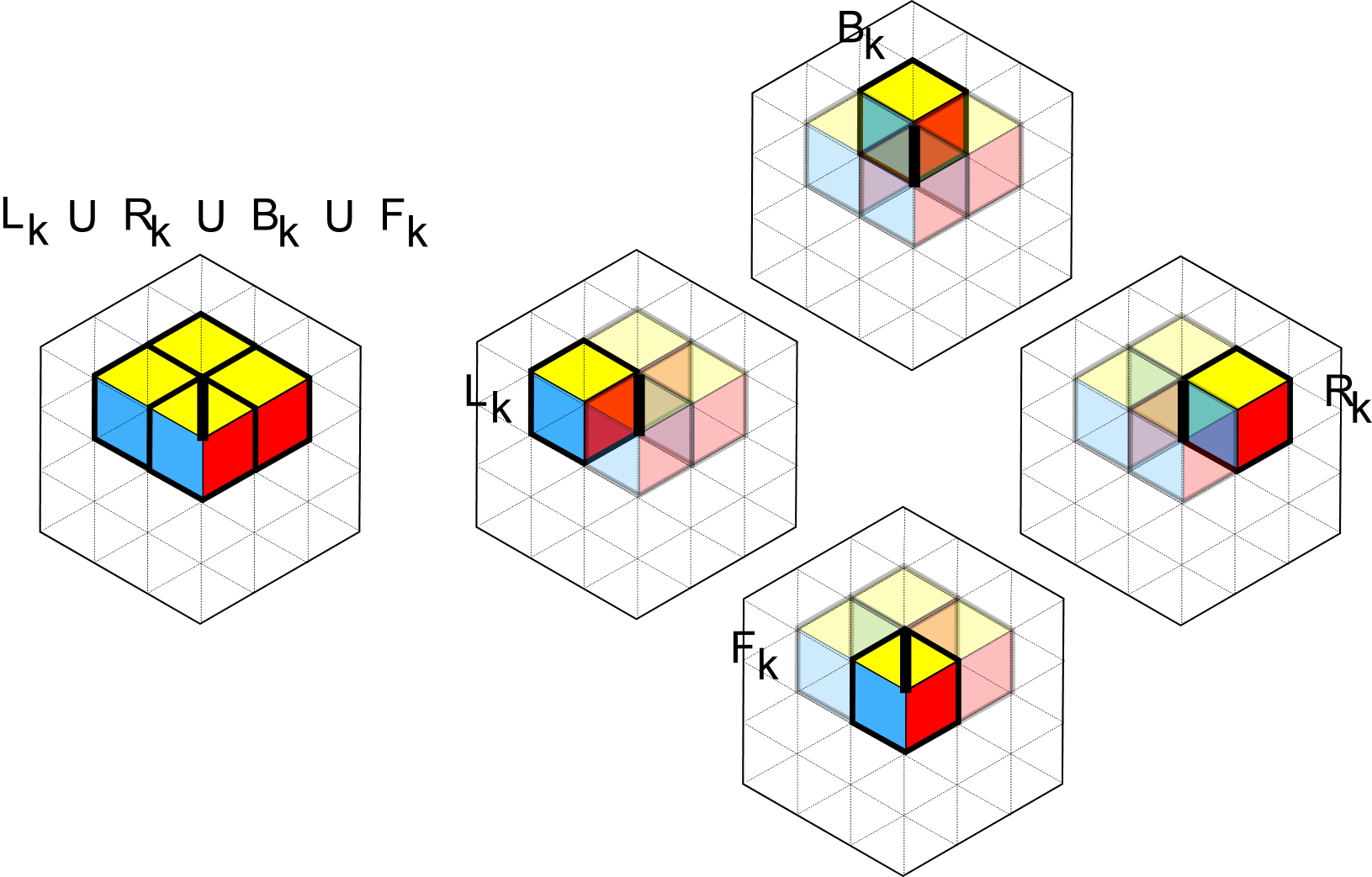}
	\end{center}
	\caption{\label{LRBF} \textbf{The $B_k$, $L_k$, $R_k$ and $F_k$ cubes} for a given integer $k$ around an edge $e$ of the triangular grid.}
\end{figure}

With this notation, the non overlapping constraint on $e$
is expressed by declaring the edge from $F_k$ to $B_{k+1}$ unbreakable. 
We now show how the saliency constraint itself can be enforced.
The ascendant graph $\H_R$ contains the following chain of cubes:
$\cdots \;\rightarrow\; B_k \;\rightarrow\; L_k \;\text{ and }\; R_k
\;\rightarrow\; F_k \;\rightarrow\; B_{k+1} \;\rightarrow\; \cdots$
Any directed cut must intersect this infinite periodic chain at least once,
otherwise one side of the cut would be empty.
Several cutting patterns are possible, as shown in Fig.~\ref{fourcases}.

Cutting the two edges $B_k\rightarrow L_k$ and $B_k\rightarrow R_k$
satisfies the saliency constraint
(Case~1 in Fig.~\ref{fourcases}).
In contrast, cutting $B_k\rightarrow R_k$ together with $L_k\rightarrow F_k$
(Case~2), or symmetrically cutting $B_k\rightarrow L_k$ together with
$R_k\rightarrow F_k$ (Case~$2'$),
produces two adjacent lozenges of the same orientation
and therefore violates the saliency constraint.
Cutting both edges $L_k\rightarrow F_k$ and $R_k\rightarrow F_k$
is valid (Case~3).
Finally, cutting $F_k\rightarrow B_{k+1}$ violates
the non overlapping constraint on $e$.

\begin{figure}[tb]
  \begin{center}
		\includegraphics[width=\textwidth]{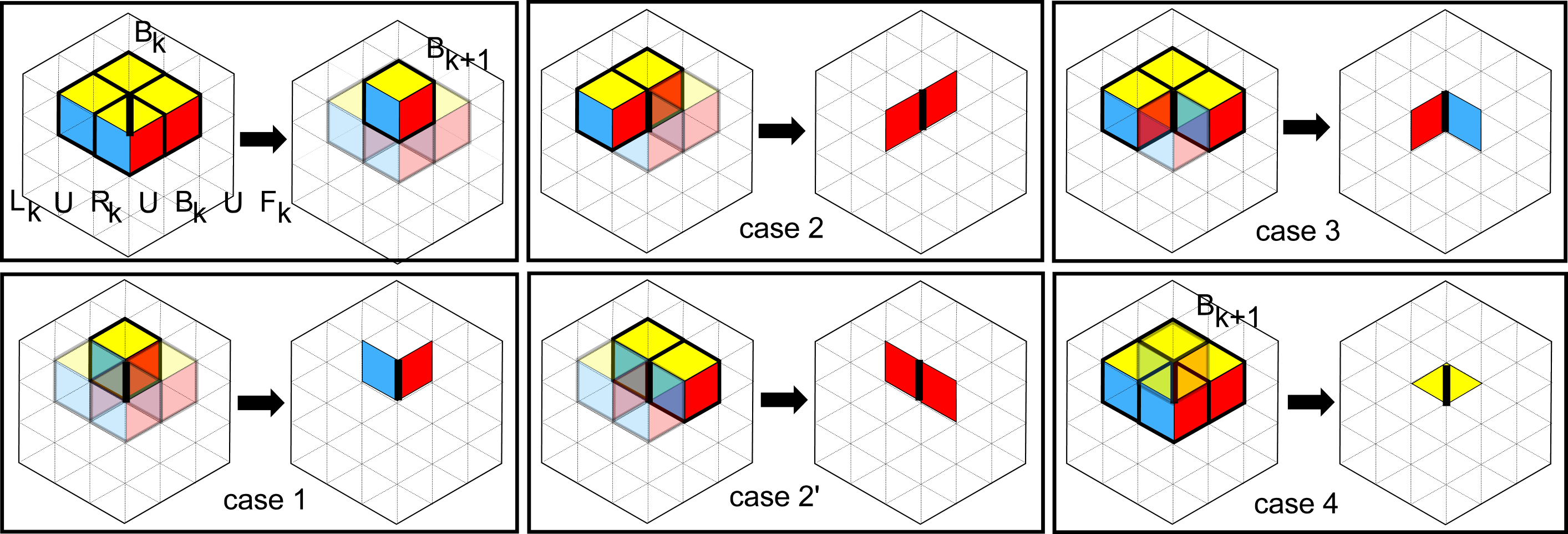}
	\end{center}
	\caption{\label{fourcases} \textbf{The different potential cuts of the ascendant chain $ B_k \rightarrow L_k / R_k
\rightarrow F_k \rightarrow B_{k+1} $ and the corresponding tiling configurations} }
\end{figure}

The conclusion is that the saliency constraint on the edge $e$
is satisfied if and only if the two cubes $L_k$ and $R_k$
lie on the same side of the dicut.
Equivalently, for every $k\in\mathbb{Z}$,
the pair $L_k,R_k$ must not be separated by the dicut.
Thus, the saliency constraint is enforced by declaring
an unbreakable edge between $L_k$ and $R_k$ for all $k$.

\subparagraph{The Enhanced Ascendant Graph $\H_R^{X_1,X_2}$.}

A simple way to enforce that two vertices $u$ and $v$ of a directed graph
belong to the same side of any dicut
is to add a pair of opposite directed edges:
one from $u$ to $v$ and one from $v$ to $u$.
Indeed, no dicut can separate $u$ and $v$ without cutting at least one of these edges in a wrong direction.
We refer to such pairs as \emph{unbreakable edges}.

This observation provides a simple mechanism for incorporating
non overlapping and saliency constraints into the ascendant graph.
We denote by $\H_R^{X_1,X_2}$ the \emph{enhanced ascendant graph},
obtained from $\H_R$ by adding, for every unbreakable constraint induced by
$\partial R$, $X_1$, and $X_2$, a pair of edges in opposite directions
between the corresponding cubes.

We can now state the central correspondence between tilings and dicuts.

\begin{proposition}\label{propi}
A tiling $T$ is a solution of the generic tiling problem
\texttt{Tiling$(R,X_1,X_2)$}
if and only if $T$ is the projection by $\varphi$
of a dicut of the enhanced ascendant graph $\H_R^{X_1,X_2}$.
\end{proposition}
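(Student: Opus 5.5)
We prove the two implications separately. The forward one combines Thurston's Theorem~\ref{thurstontheorem} with the Claim on dicuts. The converse is the real content: we show that the face set of any dicut of $\H_R^{X_1,X_2}$ projects onto a tiling of $R$.

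\emph{($\Rightarrow$)} Let $T$ be a solution of \texttt{Tiling$(R,X_1,X_2)$}. By Theorem~\ref{thurstontheorem} it lifts to a roof $\varphi^{-1}(T)$, and by the Claim the roof defines a dicut $(\bottom,\top)$ of $\H_R$. It remains to check that this dicut cuts none of the added opposite pairs, i.e.\ that it separates no unbreakable pair.

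For a non-overlapping constraint on an edge of $\partial R\cup X_1\cup X_2$, the cut edge $F_k\to B_{k+1}$ would project, by Remark~\ref{remark}, onto a lozenge of $T$ overlapping the forbidden edge. This is impossible since $T$ satisfies (i). For a saliency constraint, the case analysis of Fig.~\ref{fourcases} applies. The roof meets each chain $B_k\to L_k/R_k\to F_k\to B_{k+1}$ exactly once, because a directed path is crossed at most once by a dicut and the roof is monotone. Since (ii) holds, we are in Case~1 or Case~3, so $L_k$ and $R_k$ lie on the same side. Hence the dicut of $\H_R$ is also a dicut of $\H_R^{X_1,X_2}$, and its projection is $T$.

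\emph{($\Leftarrow$)} Let $(\bottom,\top)$ be a dicut of $\H_R^{X_1,X_2}$, and let $S$ be the set of square faces dual to the cut edges.

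\textbf{Step 1: periodic chains.} Above each vertex $v\in R^0$ the stack $\varphi^{-1}(v)$ is an infinite chain of cubes. Between consecutive cubes of this chain there are directed paths of length~3 in $\H_R$. We must verify that such paths survive condition~(b) at least on the edges lying inside $R$; this is where simple connectivity and the choice of boundary edges enter.

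\textbf{Step 2: one crossing per stack.} Since the dicut is crossed at most once along any directed path, each stack has a unique threshold height $h(v)$ where it passes from $\bottom$ to $\top$. Both sides are non-empty, and each side is closed under the $(1,1,1)$-shift in the appropriate direction. We then need every stack to meet both sides. I would argue this by connectivity of $R$: adjacent stacks are linked by edges in both height directions, so their thresholds differ by a bounded amount. This step is the main obstacle; I would attack it by showing $|h(u)-h(v)|\le 1$ for adjacent $u,v$ via the length-1 edges of $\H_R$ together with the periodicity.

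\textbf{Step 3: $S$ is a tiling.} I would argue locally around each triangle of $R$. Its three vertices have thresholds $h$, with differences in $\{0,\pm1\}$ forced by the dicut condition. Checking the two triangle orientations shows that exactly one face of $S$ projects onto a lozenge containing the triangle. This is the standard height-function argument, and it shows that $\varphi(S)$ is a lozenge tiling of $R$.

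\textbf{Step 4: the constraints hold.} The unbreakable pairs attached to $\partial R\cup X$ forbid cutting $F_k\to B_{k+1}$, so no lozenge overlaps a boundary edge or an edge of $X$; in particular $\varphi(S)$ stays inside $R$. The $L_k\leftrightarrow R_k$ pairs exclude Cases~2 and~$2'$ of Fig.~\ref{fourcases}, so the saliency condition (ii) holds.

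Thus $\varphi(S)$ is a solution of \texttt{Tiling$(R,X_1,X_2)$}. The delicate point remains Step~2–3, namely that an arbitrary dicut is a genuine monotone surface.
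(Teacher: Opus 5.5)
Your forward direction matches the paper's, and you even spell out the saliency case check that the paper leaves to the figure. The converse carries the content, but it stops at a plan, and the quantitative claim the plan rests on is false as written. With $h(v)$ a threshold \emph{height}, the cubes over adjacent vertices lie in different residue classes mod $3$. So $h(u)-h(v)$ is never $0$, and across the short diagonal of any lozenge it is $\pm 2$. Neither $|h(u)-h(v)|\le 1$ nor a case check over differences in $\{0,\pm1\}$ can be carried out. The correct local fact is that for a positively oriented edge $u\to v$ of $R$, one has $h(v)-h(u)\in\{1,-2\}$:
\begin{itemize}
\item the difference is at most $1$, since $\bottom$ is closed under predecessors along the ascendant edge;
\item it is at least $-2$, via the two-step positive path $v\to w\to u$ around a triangle of $R$ containing $uv$;
\item it is $\equiv 1 \pmod 3$.
\end{itemize}
The three positively oriented edges of a triangle form a directed $3$-cycle along which these differences sum to $0$, so exactly one of them carries $-2$. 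The ascendant edge from height $t$ over $u$ to height $t+1$ over $v$ is cut iff $t\le h(u)$ and $t+1>h(v)$. This has no solution when $h(v)-h(u)=1$, and exactly the solution $t=h(u)$ when $h(v)-h(u)=-2$. That is the computation your Step~3 needs. Incidentally, Step~1 does not use simple connectivity: the length-$3$ paths exist over any vertex lying in a triangle of $R$.

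The paper avoids thresholds entirely. For the triangle $\varphi(x-1,y,z),\varphi(x,y,z),\varphi(x,y,z+1)$, the bi-infinite directed path $\cdots\to(x-1,y,z)\to(x,y,z)\to(x,y,z+1)\to(x,y+1,z+1)\to\cdots$ of $\H_R$ winds around it. By Remark~\ref{remark}, its edges are exactly the lifts of the three lozenges that can contain the triangle. So ``covered exactly once'' is literally ``this path is crossed exactly once''. It is crossed at most once because it is a directed path, and at least once because otherwise one side of the dicut would be empty. Your Step~2 idea is still useful here, since the paper states that last implication without comment and your connectivity argument is exactly what justifies it. The path contains the full stacks over the three vertices. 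Since $R$ is connected and every triangle is a directed $3$-cycle, every cube reaches arbitrarily high cubes of these stacks and is reached from arbitrarily low ones. Hence a path lying entirely in $\bottom$, or entirely in $\top$, forces the whole graph onto that side.
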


Moreover, the corresponding dicut is unique up to a global translation
along the vector $\mathds{1}$, reflecting the classical height-shift ambiguity.

\begin{proof}
Most of the work has already been done.
By Thurston's theorem, any tiling solution $T$ of the region $R$
can be lifted to a roof, which is a dicut of the ascendant graph $\H_R$.
Since the projection of $T$ satisfies the non overlapping and saliency constraints,
this dicut does not cross any unbreakable edge,
and therefore is also a dicut of the enhanced graph $\H_R^{X_1,X_2}$.

Conversely, let us consider a dicut of $\H_R^{X_1,X_2}$.
We have shown that the unbreakable edges enforce the non overlapping
and saliency constraints.
It remains to prove that the projection of the cut defines a valid tiling of $R$,
that is, that every triangle of the region is covered by exactly one lozenge.

First, assume that a triangle of $R$ is not covered.
For instance, consider the triangle with vertices
$\varphi(x-1,y,z)$, $\varphi(x,y,z)$, and $\varphi(x,y,z+1)$.
In this case, none of the edges of the infinite ascendant chain
\[
\cdots \rightarrow (x-1,y,z)+[0,1]^3
\rightarrow (x,y,z)+[0,1]^3
\rightarrow (x,y,z+1)+[0,1]^3
\rightarrow (x,y+1,z+1)+[0,1]^3
\rightarrow \cdots
\]
is cut.
This implies that one side of the dicut is empty,
which contradicts the definition of a dicut.
Hence, every triangle of $R$ is covered by at least one lozenge.

It remains to show that no triangle is covered by more than one lozenge.
Assume that the same triangle
$\varphi(x-1,y,z)$, $\varphi(x,y,z)$, $\varphi(x,y,z+1)$
is covered twice.
Then the corresponding ascendant chain above
is cut at least twice.
This contradicts the property any directed path can be crossed at most once by a dicut.
Therefore, each triangle of $R$ is covered by exactly one lozenge,
and the projection of the dicut defines a valid tiling.
\end{proof}

The goal of the next steps is to transform the problem of computing a dicut in the periodic graph $\H_R^{X_1,X_2}$ into a system of difference constraints.
This type of reduction is not specific to the graph $\H_R^{X_1,X_2}$: it applies more generally to $\mathbb{Z}$-periodic directed graphs.
Since a full general treatment is beyond the scope of this paper, we only illustrate the mechanism in Fig.~\ref{cutcut}, using $\mathbb{Z}$-periodic graphs different from $\H_R^{X_1,X_2}$.
The purpose of this illustration is to provide intuition for why dicuts in $\mathbb{Z}$-periodic graphs can be characterized and computed via systems of difference constraints.

\begin{figure}[ht]
  \begin{center}		\includegraphics[width=\textwidth]{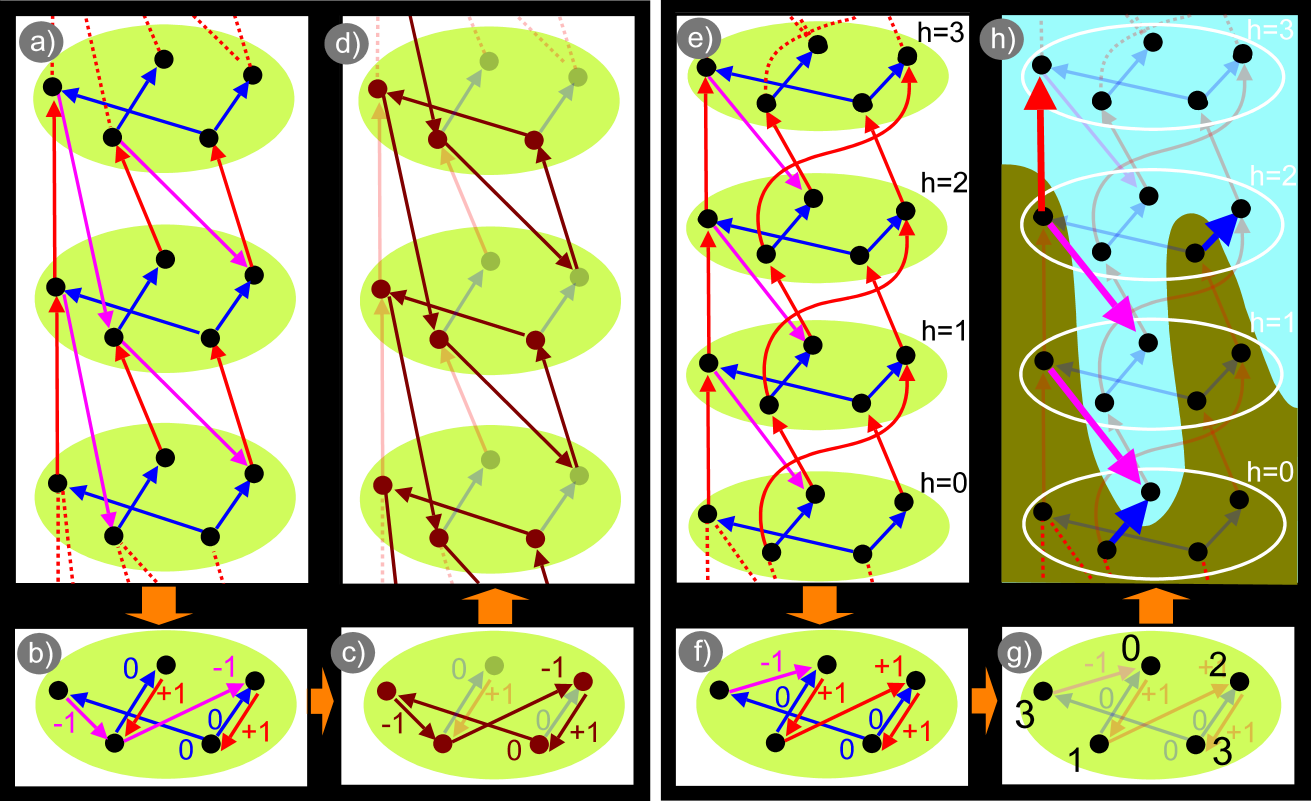}
	\end{center}
	\caption{\label{cutcut} \textbf{Computing a dicut in a $\Z$-periodic graph can be reduced to difference constraints systems.} In a) and e), an ascendant $\Z$-periodic graph $G$ decomposed in layers of different heights. In b) and f), the graph is projected into one layer and each edge is  weighted by the height difference between its target and source. In c) the difference constraints system has a cycle of strictly negative total weight. Then in this case, the graph $G$ has no dicut, as shown in d), due to the infinite descending path. In g), there is no absorbing cycle. The difference system admits solutions which can be lifted in a dicut.}
\end{figure}

\subsection{From Dicuts to Height Functions}

Any dicut of the enhanced ascendant graph
$\H_R^{X_1,X_2}$ can be characterized by the heights of the highest cubes
belonging to the $\bottom$ part of the cut.
Equivalently, a dicut induces an integer-valued function
$
h : R^0 \rightarrow \mathbb{Z}
$,
defined on the vertices $v$ of the triangular grid, where $h(v)$ is the maximum
height of a cube of $\varphi^{-1}(v)$ and that lies below the cut (with the association cube $(x,y,z)+[0,1]^3$ /  point $(x,y,z)$ that we use, the height of the cube $(x,y,z)+[0,1]^3$  is the height of the point $(x,y,z)$).

The function $h$ is defined up to an additive constant, reflecting the invariance
of $\H_R^{X_1,X_2}$ under translation by the vector $(1,1,1)$.
Up to this global shift, $h$ coincides with Thurston’s classical height function
associated with lozenge tilings.

\subsection{From Height Functions to Difference Constraints}

Each directed edge $c_u \rightarrow c_v$ of the enhanced ascendant graph
$\H_R^{X_1,X_2}$ induces a local constraint on the height function $h$.
Here $c_u$ (resp.\ $c_v$) is a cube projecting onto a vertex $u$ (resp.\ $v$)
of the region $R$.
We denote by $w(u,v)$ the fixed height increment associated with this edge,
namely
$$
w(u,v) = height(c_v) - height(c_u).
$$

Let $c'_v$ be the highest cube based in $\varphi^{-1}(v)$ that lies below the cut.
By definition, its height is $h(v)$.
By the $\mathbb{Z}$-periodicity of the graph $\H_R^{X_1,X_2}$,
the directed edge $c_u \rightarrow c_v$ appears at all vertical levels.
In particular, there exists a cube $c'_u$ based in a point of  $\varphi^{-1}(u)$ such that
$c'_u \rightarrow c'_v$ is an edge of $\H_R^{X_1,X_2}$.
Since this edge has height increment $w(u,v)$, the cube $c'_u$ has height
$h(v) - w(u,v)$.

As $c'_u$ lies below the cut, we must have
$
h(u) \geq h(v) - w(u,v)
$, or equivalently,
$$
h(v) - h(u) \le w(u,v).
$$
Thus, each directed edge of $\H_R^{X_1,X_2}$ yields a difference constraint
on the height function $h$.

In the enhanced ascendant graph $\H_R^{X_1,X_2}$, the value of $w(u,v)$ depends
only on the type of the edge.
There are three types of edges:
\begin{enumerate}
    \item The original ascendant edges of $\H_R$, which go from a cube to one of
    its three upper adjacent cubes. These edges increase the height by $1$ and
    therefore have weight $+1$.
    
    \item The descending edges added to encode non overlapping constraints through unbreakable edges
    (arising from $\partial R$ and from $X = X_1 \cup X_2$). These edges go from
    $(x,y,z)+[0,1]^3$ to $(x-1,y,z)+[0,1]^3$, $(x,y-1,z)+[0,1]^3$, or
    $(x,y,z-1)+[0,1]^3$, and decrease the height by $1$. Their weight is $-1$.
    
    \item The unbreakable edges encoding saliency constraints, which connect
    pairs of cubes at the same height (such as the cubes $L_k$ and $R_k$).
    These edges have weight $0$.
\end{enumerate}

Collecting all such inequalities yields a system of difference constraints whose
variables are the values $h(v)$ for $v \in R^0$.
In Section 2, we have denoted by $DC(R,X_1,X_2)$ the weighted directed graph encoding these constraints.
This last step 
ends the sequence of transformations of the  generic tiling problem \texttt{Tiling$(R,X_1,X_2)$} and  proves Theorem \ref{thm:dc}

\central*

\subparagraph{Absorbing Cycles and Feasibility.}

A classical result on systems of difference constraints
states that such a system is feasible if and only if
the associated weighted directed graph contains no cycle
of strictly negative total weight. These two cases are illustrated in Fig.~\ref{start}.
It shows the next corollary of Theorem \ref{thm:dc}.

\begin{corollary}
A tiling instance
\texttt{Tiling$(R,X_1,X_2)$}
admits a solution if and only if the graph $DC(R,X_1,X_2)$
has no cycle of strictly negative weight.
\end{corollary}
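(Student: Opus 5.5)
The plan is to combine Theorem~\ref{thm:dc} with the classical characterization of feasibility for systems of difference constraints, keeping track of integrality since Theorem~\ref{thm:dc} speaks of integer-valued functions $h$.

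\emph{Forward direction.} Assume a solution tiling $T$ of \texttt{Tiling$(R,X_1,X_2)$} exists. By Theorem~\ref{thm:dc}, $T$ yields an integer function $h:R^0\to\mathbb{Z}$ with $h(v)-h(u)\le w(u,v)$ for every arc $u\to v$ of $DC(R,X_1,X_2)$. Take any directed cycle $u_0\to u_1\to\cdots\to u_m=u_0$ and sum these inequalities along it. The left-hand sides telescope to $0$, so $0\le\sum_i w(u_i,u_{i+1})$. Hence no cycle has strictly negative weight.

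\emph{Converse.} Assume $DC(R,X_1,X_2)$ has no negative cycle.
\begin{enumerate}
\item Since $R$ is connected and every positively oriented grid edge of $R$ is an arc of weight $+1$, I first check that every vertex is reachable from a fixed source $s$. A negative step is a combination of two positive steps (the three positive directions $\varphi(1,0,0),\varphi(0,1,0),\varphi(0,0,1)$ sum to zero). For a finite simply connected region, I would argue reachability through boundary arcs of weight $-1$ together with positive arcs inside triangles. As a fallback, I would use the standard trick of adding a super-source with $0$-weight arcs to all vertices. This adds no cycle, so it preserves the absence of negative cycles.
\item Define $h(v)=\delta(s,v)$, the shortest-path distance. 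This is well defined and finite because there is no negative cycle and every vertex is reachable. It is integer-valued because all weights lie in $\{-1,0,+1\}$.
\item The triangle inequality $\delta(s,v)\le\delta(s,u)+w(u,v)$ gives exactly the difference constraints.
\item Theorem~\ref{thm:dc} then turns $h$ into a tiling solution.
\end{enumerate}

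\emph{Main obstacle.} The delicate point is the unbounded case, e.g.\ $R=\triangle$. There the graph is infinite, and "no negative cycle" does not immediately give finite distances: there could be infinite descending paths without any cycle. For this proposal I would restrict the argument to finite regions, where Bellman--Ford theory applies verbatim. I would note that the infinite case would require showing that, away from the finite constraint set, the graph is essentially uniform, so that distances stay bounded below. This is the content deferred to Corollary~\ref{col}.
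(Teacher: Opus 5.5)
Your route is the paper's: combine Theorem~\ref{thm:dc} with the classical criterion that a system of difference constraints is feasible exactly when its graph has no negative cycle. The paper only cites that criterion. You reprove it by telescoping and shortest paths, and you correctly check that the distances are integers, which Theorem~\ref{thm:dc} needs. For finite $R$ your argument is complete.

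\textbf{The gap: unbounded regions.} The statement is meant for every simply connected $R$, bounded or not, and the paper explicitly allows unbounded regions. Moreover, the proof of Claim~\ref{cl} only shows that a negative cycle in $DC(\triangle,X_1,X_2)$ yields one in $G^{+-}$. It therefore relies on ``no negative cycle $\Rightarrow$ feasible'' for this infinite graph, so deferring that case to Corollary~\ref{col} is circular. A uniformity argument relying on a finite constraint set would also miss unbounded regions with infinite boundary, such as a half-plane.

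\textbf{How to close it.} The obstacle you name does not actually arise, and your own Step~1 already closes the gap, since it never uses finiteness.
\begin{enumerate}
\item Every edge of $R$ lies in a triangle of $R$. The three edges of any triangle of $\triangle$ form a directed $3$-cycle of positive arcs, so the reverse of every arc is a path of two positive arcs. Hence $DC(R,X_1,X_2)$ is strongly connected whenever $R$ is connected.
\item Fix a path $P$ from $v$ to $s$. For every walk $Q$ from $s$ to $v$, the closed walk $QP$ has nonnegative weight when there is no negative cycle, so $w(Q)\ge -w(P)$.
\item Thus $\delta(s,v)$ is an infimum of integers bounded below, hence attained, finite and integral. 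The inequality $\delta(s,v)\le\delta(s,u)+w(u,v)$ holds for infima just as before.
\end{enumerate}
Infinite descending paths are harmless: only a lower bound for each fixed target matters.

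The one thing to drop is the super-source fallback, which genuinely fails for infinite graphs. A chain of arcs $v_{i+1}\to v_i$ of weight $-1$ gives $\delta(s^*,v_0)=-\infty$, although $h(v_i)=i$ is feasible. With this change your proof covers the statement in full. The paper's one-line appeal to the finite classical result does not do this explicitly.
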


When there is no cycle of strictly negative weight, a feasible height function
can be computed using classical shortest-path algorithms
for graphs with possibly negative weights, such as the Bellman-Ford algorithm \cite{Bellman}.
The resulting height function directly defines a dicut of
$\H_R^{X_1,X_2}$ and therefore a valid tiling of the region $R$. It provides the first algorithm that we now discuss with its variants. 

\section{Algorithms}

We now discuss the algorithms that can be used to solve the system of difference constraints induced by
$DC(R,X_1,X_2)$, which encodes the generic tiling problem
\texttt{Tiling$(R,X_1,X_2)$}.
We start with the Bellman–Ford algorithm, then revisit Thurston’s classical algorithm and present the advancing surface algorithm.
We conclude the section with the special case where the underlying region is the infinite triangular grid subject to finitely many local constraints.

\subsection{Bellman-Ford}

The Bellman–Ford algorithm is the standard method for solving systems of difference constraints \cite{Bellman}.
Given a weighted directed graph, it detects the presence of a cycle of strictly negative total weight, and otherwise computes a feasible assignment of distances satisfying all constraints.
Its time complexity is $O(|V||E|)$ for a graph with vertex set $V$ and edge set $E$.

A direct strategy for solving \texttt{Tiling$(R,X_1,X_2)$} consists of first constructing the weighted directed graph $DC(R,X_1,X_2)$, which can be done in linear time with respect to the size of the region, and then applying the Bellman–Ford algorithm to the resulting system of difference constraints.
This approach is illustrated in Figs.~\ref{start2} and~\ref{end}.

In the graph $DC(R,X_1,X_2)$, the vertex set is $R^0$, and the number of edges is linear in $|R^0|$.
Therefore, running Bellman–Ford on $DC(R,X_1,X_2)$ requires $O(|R^0|^2)$ time or equivalently $O(|R|^2)$ time.
In particular, for the Calisson puzzle
\texttt{Tiling$(\varhexagon_n,\emptyset,X_2)$},
where $|R| = \Theta(n^2)$,
the Bellman–Ford approach runs in time $O(n^4)$.

\subsection{Thurston's Algorithm Revisited}

Thurston’s classical algorithm applies to simply connected, bounded regions without interior constraints, that is, to instances of the form
\texttt{Tiling$(R,\emptyset,\emptyset)$}.
Its efficiency relies on the special structure of the graph $DC(R,\emptyset,\emptyset)$ when $R$ is bounded.

By construction, every boundary edge of $R$ appears in $DC(R,\emptyset,\emptyset)$ with weight $+1$ in the positive orientation and $-1$ in the opposite orientation.
Consequently, if the total weight of the boundary cycle is $x$ in one direction of traversal, it is $-x$ in the other direction. It follows that if the total weight is nonzero in either direction, then the instance is not tilable.

We now assume that the boundary cycle has total weight zero.
In the setting of \texttt{Tiling$(R,\emptyset,\emptyset)$}, the graph $DC(R,\emptyset,\emptyset)$ contains no negatively weighted edges in the interior of the region.

We recall a classical observation underlying Thurston’s algorithm:
if a region $R$ satisfying the previous boundary condition is nevertheless not tilable, then there exists a shortcut between two boundary vertices, meaning a path in the interior whose total weight is strictly smaller than that of the corresponding boundary path.

Indeed, if $R$ is not tilable, Theorem~\ref{thm:dc} implies that $DC(R,\emptyset,\emptyset)$ contains a cycle of strictly negative total weight.
Since all negatively weighted edges lie on the boundary, such a cycle must involve at least two boundary vertices.
Let $(v_i)$ denote the sequence of boundary vertices encountered along this cycle.
For at least one pair $(v_i, v_{i+1})$, the interior path connecting them has strictly smaller weight than the boundary path between the same vertices, yielding a shortcut.

We now come back to Thurston’s algorithm and explain how it works.
First, it computes the heights of the boundary vertices by traversing $\partial R$.
If the total weight of the boundary cycle is nonzero, the algorithm immediately concludes that $R$ is not tilable.
Otherwise, since all interior edges have nonnegative weight, the algorithm computes the remaining distances in $DC(R,\emptyset,\emptyset)$ using a Dijkstra-like shortest-path procedure.
If, during this process, the distance of a boundary vertex is decreased, a shortcut has been detected and the region is not tilable.
If no such decrease occurs, the computed distances/heights define a valid tiling of $R$.

This algorithm runs in time $O(|R|\log |R|)$.

\subsection{The Advancing Surface Algorithm}\label{tasa}

We now present the \emph{advancing surface algorithm} for solving tiling instances
\texttt{Tiling$(R,X_1,X_2)$} with interior non-overlapping and saliency constraints.
Unlike the previous approaches, this algorithm does not compute a height function on the vertices of the region.
Instead, it works directly in the graph of cubes by constructing a dicut of the enhanced ascendant graph
$\H_R^{X_1,X_2}$.

The key idea is that it is not necessary to consider the full infinite graph $\H_R^{X_1,X_2}$.
One can restrict attention to a finite domain delimited by two extremal dicuts and then
compute a solution by a graph traversal procedure.

\subparagraph{First step: Lower and Upper Dicuts.}
As in Thurston’s algorithm, we first examine the boundary of the region $R$.
If the directed boundary cycles has nonzero total weight, then $R$ is not tilable and we stop.
We therefore assume that the boundary cycle has total weight zero.

We temporarily ignore the interior constraints and work in the cube graph
$\H_R^{\emptyset,\emptyset}$.
The boundary cycle of $R$ can be lifted to a directed cycle $c$ in this graph.
By periodicity, this yields an infinite stack of directed cycles, each translated from the previous one
by the vector $\mathds{1}=(1,1,1)$.
If the region $R$ is tilable, the graph $\H_R^{\emptyset,\emptyset}$ admits a dicut, and each cycle of this stack
lies entirely either in the bottom or in the top part of the cut.
Consequently, there exists a pair of consecutive cycles $c_0$ and $c_1$ such that
$c_0$ lies in the bottom part and $c_1$ in the top part of the dicut.
Then 
 all cubes that can reach a cube of $c_0$ by a directed path
belong to the lower part of a new dicut, which we call the \emph{lower cut}.
Dually, all cubes reachable from a cube of $c_1$ belong to the upper part of a second new dicut,
called the \emph{upper cut}.
Any dicut of $\H_R^{X_1,X_2}$ separating $c_0$ and $c_1$ must lie between these two extremal cuts.

The first step of the algorithm therefore consists in computing the lower and upper dicuts
of $\H_R^{\emptyset,\emptyset}$ using Thurston’s algorithm.
Either this step determines that $R$ is not tilable, or it provides the two extremal dicuts.

\subparagraph{Second step: Incorporating Interior Constraints.}
Assuming the first step succeeds, we now incorporate the interior constraints given by $X_1$ and $X_2$.
These constraints introduce additional directed edges in the enhanced graph $\H_R^{X_1,X_2}$,
which may create new predecessors for cubes in the lower cut.

The second step consists of a graph traversal starting from the lower cut:
we iteratively add to the lower part of the lower dicut all cubes that have a directed edge
leading to an already known cube of the lower part.

If during this process a cube belonging to the top part of the upper cut is added, then no dicut
separating $c_0$ and $c_1$ exists.
In this case, the instance \texttt{Tiling$(R,X_1,X_2)$} is not feasible.
Otherwise the growth of the bottom part of the lower dicut eventually ends when it remains no predecessor to add, providing a dicut of the graph  $\H_R^{X_1,X_2}$ and thus a  tiling solution of \texttt{Tiling$(R,X_1,X_2)$} .

\subparagraph{Complexity.}
The time complexity of the first step is that of Thurston’s algorithm.
The second step visits only cubes lying between the lower and upper dicuts of
$\H_R^{\emptyset,\emptyset}$.
The vertical distance between these two cuts is bounded by the length of the boundary,
namely $|\partial R|$.
Therefore, the number of cubes in this intermediate region is at most $O(|R|\,|\partial R|)$,
which bounds the running time of the second step.

Overall, the advancing surface algorithm runs in time $O(|R|\,|\partial R|)$.

In the particular case of the Calisson puzzle
\texttt{Tiling$(\varhexagon_n,\emptyset,X_2)$},
the lower and upper cuts are trivial: they correspond respectively to the empty
and full $n\times n\times n$ cube.
Since $|R|=\Theta(n^2)$ and $|\partial R|=\Theta(n)$,
the advancing surface algorithm runs in time $O(n^3)$.

\begin{figure}[ht]
  \begin{center}
    \includegraphics[width=0.95\textwidth]{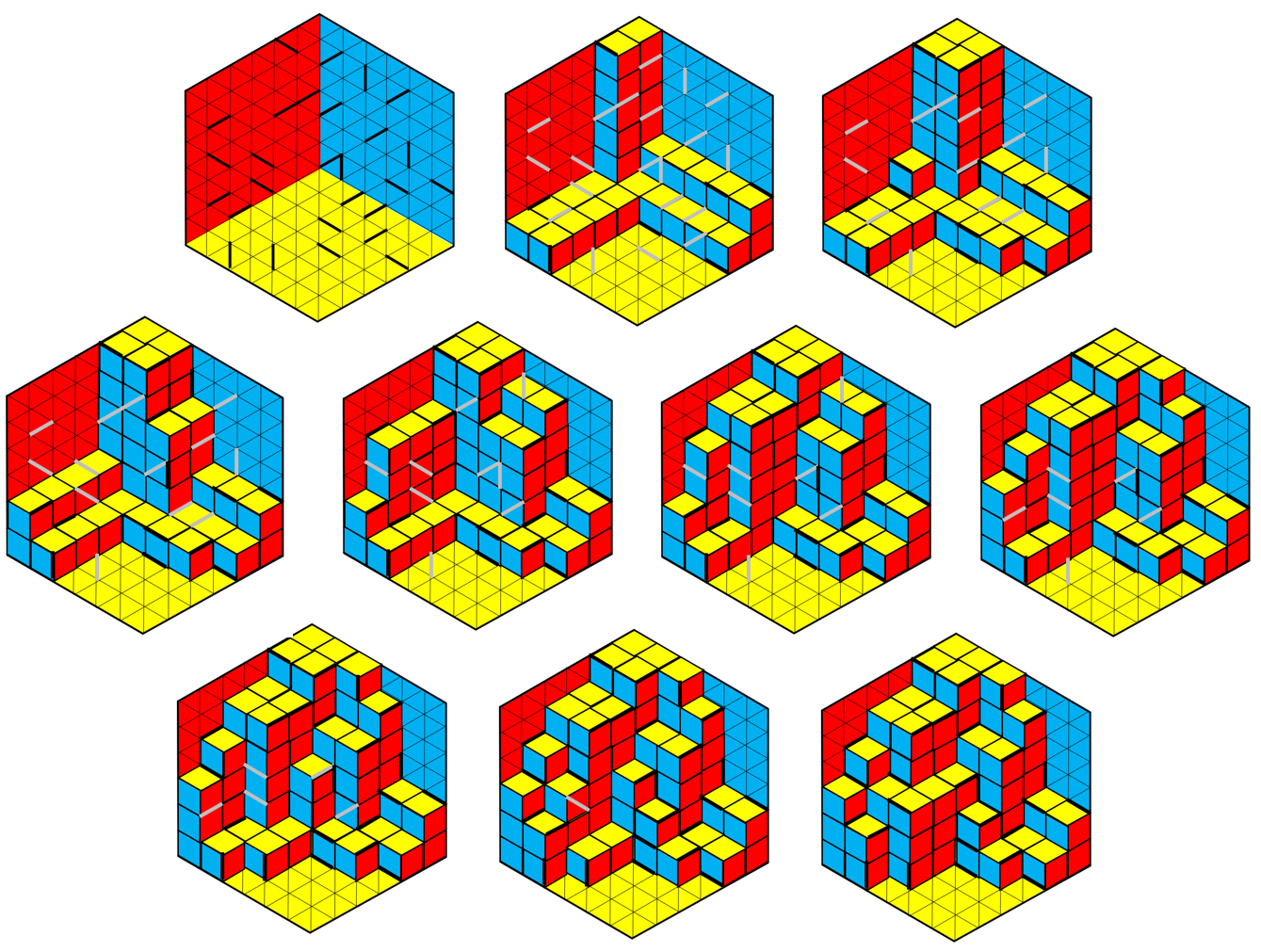}
  \end{center}
  \caption{\label{ads}
  \textbf{Solving an instance \texttt{Tiling$(\varhexagon_6,\emptyset,X_2)$} of the Calisson puzzle
  using the advancing surface algorithm.}
  All drawn interior edges belong to $X_2$.
  For a hexagonal region, the lower and upper cuts are trivial.
  At each step, the algorithm adds the minimal number of cubes needed to satisfy a violated constraint.
  Several additions may occur in parallel.
}
\end{figure}

\subparagraph{With a Pencil and a Rubber.}
Although the mathematical framework underlying the advancing surface algorithm is nontrivial,
its implementation in the context of the Calisson puzzle is remarkably simple.
Using only a pencil, an eraser, and a good three-dimensional intuition,
one can simulate on the paper the successive additions of cubes.
The previous analysis guarantees that this naive-looking procedure never misses a solution when one exists.
Figure~\ref{ads} illustrates several steps of the algorithm applied to one of the instances shown in Fig.~\ref{exercise}.

\subsection{Deciding whether the Whole Triangular Grid Can Be Tiled}

We now consider tiling instances of the form
\texttt{Tiling$(\triangle^2,X_1,X_2)$},
where the region $\triangle^2$ is the entire triangular grid and
$X_1$ and $X_2$ are two finite sets of input edges.
The problem is to decide whether such an instance admits a tiling.
This problem is of particular interest since, to the best of our knowledge,
none of the classical algorithms from the lozenge tiling literature
can be applied in this setting.
In contrast, our graph-theoretical framework yields a remarkably simple solution.
The key idea is to reduce the infinite system of difference constraints induced by
the graph $DC(\triangle^2,X_1,X_2)$
to an equivalent system induced by a finite graph.

\subparagraph{Decomposition into Positive and Negative Graphs.}
We now describe this reduction directly on the infinite graph
$DC(\triangle^2,X_1,X_2)$.
The strategy consists in decomposing it into two directed weighted graphs,
called the \emph{positive graph} $G^+$ and the \emph{negative graph} $G^-$.

\begin{itemize}
    \item The positive graph $G^+$ contains all strictly positively weighted edges of
    $DC(\triangle^2,X_1,X_2)$ and their incident vertices.
    Its vertex set is $\triangle^0$, hence it is infinite.
    By construction, all its edges are directed along the three lattice directions
    and have weight $+1$.
    As a consequence, the graph $G^+$ has an infinite number of vertices but it is highly regular and its distance function can be computed explicitly (Fig.~\ref{distances}).

    \item The negative graph $G^-$ contains all edges of weight $0$ or $-1$,
    namely the edges encoding the non-overlapping constraints of $X_1 \cup X_2$
    and the lateral edges encoding the saliency constraints of $X_2$,
    together with their incident vertices.
    Since $X_1$ and $X_2$ are finite, the graph $G^-$ is finite.
\end{itemize}

We now construct a finite graph $G^{+-}$ as follows.
Starting from $G^-$, we add a directed edge from any vertex $u$ of $G^-$ to any other
vertex $v$ of $G^-$.
The weight of this new edge is defined as the distance from $u$ to $v$ in the positive graph $G^+$.

\begin{claim}\label{cl}
The system of difference constraints induced by the infinite graph
$DC(\triangle^2,X_1,X_2)$ is feasible if and only if the system of difference constraints
induced by the finite graph $G^{+-}$ is feasible.
\end{claim}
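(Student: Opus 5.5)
We argue through the negative-cycle characterization in both directions. A system of difference constraints is feasible if and only if its graph has no cycle of strictly negative total weight. For the infinite graph $DC(\triangle^2,X_1,X_2)$ we also need a substitute for Bellman--Ford. We would take $h(v)=\min_u(\mathrm{dist}(u,v))$ over a finite set of sources, or equivalently the distance from a virtual source, and show that it is well defined, i.e.\ finite and bounded below, when no negative cycle exists. So the plan has three parts. First, show that a negative cycle of $DC$ yields a negative cycle of $G^{+-}$, and conversely. Second, check that feasibility of $G^{+-}$ lifts to a feasible $h$ on all of $\triangle^0$. Third, note the trivial direction: a feasible $h$ on the infinite graph restricts to $G^{+-}$.

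For the trivial direction, let $h$ be feasible on $DC(\triangle^2,X_1,X_2)$. Every edge of $G^-$ is an edge of $DC$. A new edge $u\to v$ of $G^{+-}$ has weight $d^+(u,v)$, the $G^+$-distance, which is realized by a path of positive edges of $DC$. Summing the constraints along that path gives $h(v)-h(u)\le d^+(u,v)$, so the restriction of $h$ to the vertices of $G^-$ is feasible for $G^{+-}$.

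For the converse, let $h^-$ be a feasible solution on the finite vertex set $V^-$ of $G^-$. We extend it to all of $\triangle^0$ by
\[
h(v)=\min_{u\in V^-}\bigl(h^-(u)+d^+(u,v)\bigr),
\]
a Dijkstra-like extension from $V^-$ through the positive graph. Since $G^+$ is strongly connected on $\triangle^0$ (the three positive directions sum to zero, so every lattice vector is a nonnegative combination), every $d^+(u,v)$ is finite and $h$ is well defined.

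We then check three properties.
\begin{itemize}
\item On $V^-$ the extension agrees with $h^-$. We have $h(v)\le h^-(v)$ taking $u=v$, and $h(v)\ge h^-(v)$ because $h^-(v)-h^-(u)\le d^+(u,v)$ holds as an edge constraint of $G^{+-}$ (this is where the added edges are used).
\item Every positive edge $v\to v'$ of weight $+1$ is satisfied, since $d^+(u,v')\le d^+(u,v)+1$ by the triangle inequality for $d^+$.
\item Every edge of weight $0$ or $-1$ is an edge of $G^-$ between vertices of $V^-$, where $h=h^-$, so it is satisfied.
\end{itemize}
Hence $h$ is feasible on $DC(\triangle^2,X_1,X_2)$. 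Integrality is preserved, since $h^-$ can be taken integral (shortest-path distances) and $d^+$ is integral. This gives the equivalence directly, without needing to reason about negative cycles in the infinite graph.

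The main obstacle is making the definition and basic properties of $d^+$ rigorous. It must be a genuine shortest-path distance: finite, integer, attained by a path, and satisfying the triangle inequality. We would compute it explicitly, as in Fig.~\ref{distances}. Writing $v-u=a\varphi(e_1)+b\varphi(e_2)+c\varphi(e_3)$ in homogeneous coordinates with the positive directions, $d^+(u,v)$ is the minimum of $a+b+c$ over nonnegative integer representations, which is finite and attained. One subtle point needs care: in $DC(\triangle^2,\ldots)$ every positive lattice edge is present, because there is no boundary, so $G^+$ is exactly the full positive lattice graph. If a vertex of $V^-$ were isolated from it, the argument would fail, but that does not happen here. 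For the complexity claim of Corollary~\ref{col}, we note that $|V^-|=O(|X_1\cup X_2|)$, so Bellman--Ford on the complete graph $G^{+-}$ costs $O(|X|^3)$.
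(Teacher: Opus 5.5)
Your proof is correct, but it takes a different route from the paper's. The paper stays entirely on the dual side. If $G^{+-}$ is infeasible, it has a negative cycle, and expanding each added edge into a shortest path of $G^+$ gives a negative cycle in $DC(\triangle^2,X_1,X_2)$. Conversely, a negative cycle of $DC(\triangle^2,X_1,X_2)$ must meet $G^-$, and contracting its maximal positive subpaths into edges of $G^{+-}$ does not increase its weight. You instead prove one direction by restricting a potential, and the other constructively, by extending a feasible $h^-$ through the min-plus formula $h(v)=\min_{u\in V^-}\bigl(h^-(u)+d^+(u,v)\bigr)$.

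Your route buys rigor at the one delicate point of the paper's argument. The paper's converse only shows that $DC(\triangle^2,X_1,X_2)$ has no negative cycle. Concluding that an \emph{infinite} system is feasible from this needs more than the classical finite-graph theorem. The missing step can be supplied, for instance by using strong connectivity of $G^+$ and taking distances from a fixed source, but the paper leaves it implicit. Your extension sidesteps the issue. As a bonus, it gives an explicit height function, and hence an explicit tiling of the whole plane, computable from the Bellman--Ford output on $G^{+-}$ in $O(|V^-|)$ time per vertex. The paper's route is shorter, and it yields a finite negative cycle of $DC(\triangle^2,X_1,X_2)$ as a certificate of non-tilability.

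Two small clean-ups:
\begin{itemize}
\item Your opening paragraph announces a negative-cycle argument in both directions. You never carry it out and do not need it, so it should be dropped.
\item When $X_1\cup X_2=\emptyset$, the set $V^-$ is empty and your minimum ranges over an empty set. Treat that case separately; there any constant function is feasible.
\end{itemize}
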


\begin{proof}
We first prove the forward implication.
Assume that the system of difference constraints induced by $G^{+-}$ is not feasible.
Then $G^{+-}$ contains a directed cycle of strictly negative total weight.
By construction of $G^{+-}$, each positively weighted edge of this cycle
corresponds to a shortest path in the positive graph $G^+$.
Replacing each such edge by the corresponding path yields a directed cycle
in $DC(\triangle^2,X_1,X_2)$ with the same total weight.
Hence, $DC(\triangle^2,X_1,X_2)$ contains a strictly negative cycle,
and its system of difference constraints is not feasible.

Conversely, assume that the infinite graph $DC(\triangle^2,X_1,X_2)$
contains a directed cycle of strictly negative total weight.
This cycle necessarily contains at least one edge of non-positive weight,
and thus intersects the negative graph $G^-$.
Any maximal subpath of this cycle consisting solely of positively weighted edges
can be replaced by a single edge of $G^{+-}$ whose weight is the corresponding
shortest-path distance in $G^+$.
This replacement does not increase the total weight of the cycle.
As a result, we obtain a directed cycle of strictly negative total weight in $G^{+-}$,
showing that the system of difference constraints induced by $G^{+-}$ is not feasible.
\end{proof}

\begin{figure}[ht]
  \begin{center}
    \includegraphics[width=0.35\textwidth]{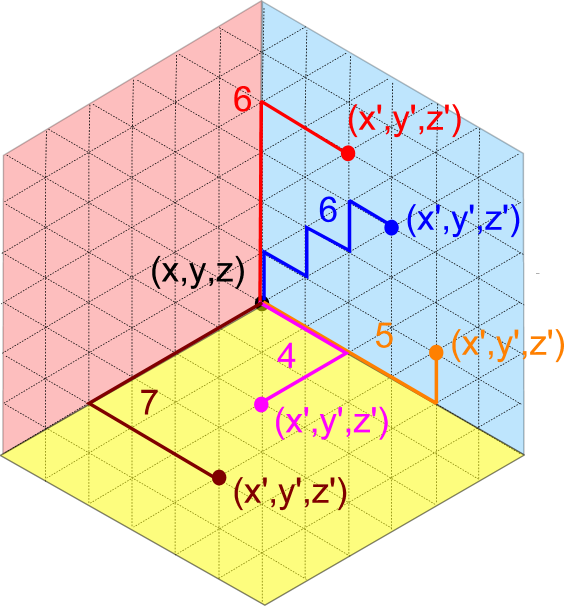}
  \end{center}
  \caption{\label{distances}
  \textbf{Distances in the positive graph $G^+$ on the triangular grid.}
  For vertices $\varphi(x,y,z)$ and $\varphi(x',y',z')$,
  the distance is $
  d = (x'-x)+(y'-y)+(z'-z)
  - 3 \min\{x'-x,\,y'-y,\,z'-z\}.
  $
  }
\end{figure}

The claim~\ref{cl} allows us to reduce the infinite system of difference constraints
associated with $DC(\triangle^2,X_1,X_2)$ to the finite system induced by $G^{+-}$.
The resulting algorithm for solving
\texttt{Tiling$(\triangle^2,X_1,X_2)$}
consists in constructing $G^{+-}$ and running the Bellman-Ford algorithm on it.
The distances in $G^+$ can be computed in constant time using the formula
$$d(\varphi(x,y,z) \rightarrow \varphi(x',y',z')) = (x'-x)+(y'-y)+(z'-z)
  - 3 \min\{x'-x,\,y'-y,\,z'-z\}
  $$

illustrated in Fig.~\ref{distances}.
If $n$ denotes the total number of edges in $X_1 \cup X_2$,
then the number of vertices of $G^{+-}$ is $O(n)$,
and the number of edges is $O(n^2)$.
Consequently, Bellman--Ford runs in time $O(n^3)$.
This proves Corollary~\ref{col}.

This final algorithmic result illustrates once again that the graph-theoretical
and difference-constraints layer added to Thurston’s theory
is not merely a reformulation,
but a powerful extension that makes it possible to solve tiling problems
far beyond the scope of classical methods.

\subsection*{Future Works}

A natural direction for future work is to investigate whether the graph
and difference constraints overlay introduced in this paper
can be extended to other tiling problems that admit a height-function formulation.
Is this approach specific to lozenge tilings or does it apply to a broader class of planar tilings whose configurations
can be encoded by monotone surfaces?

As an illustration, Fig.~\ref{execut} shows how domino tilings with interior
non-overlapping constraints can also be solved by reducing the problem
to a system of difference constraints and computing shortest paths.
This example strongly suggests that the graph-theoretical framework
developed in the previous pages may provide a unifying algorithmic viewpoint
for a wide range of constrained tiling problems.

\begin{figure}[ht!]
  \begin{center}
    \includegraphics[width=\textwidth]{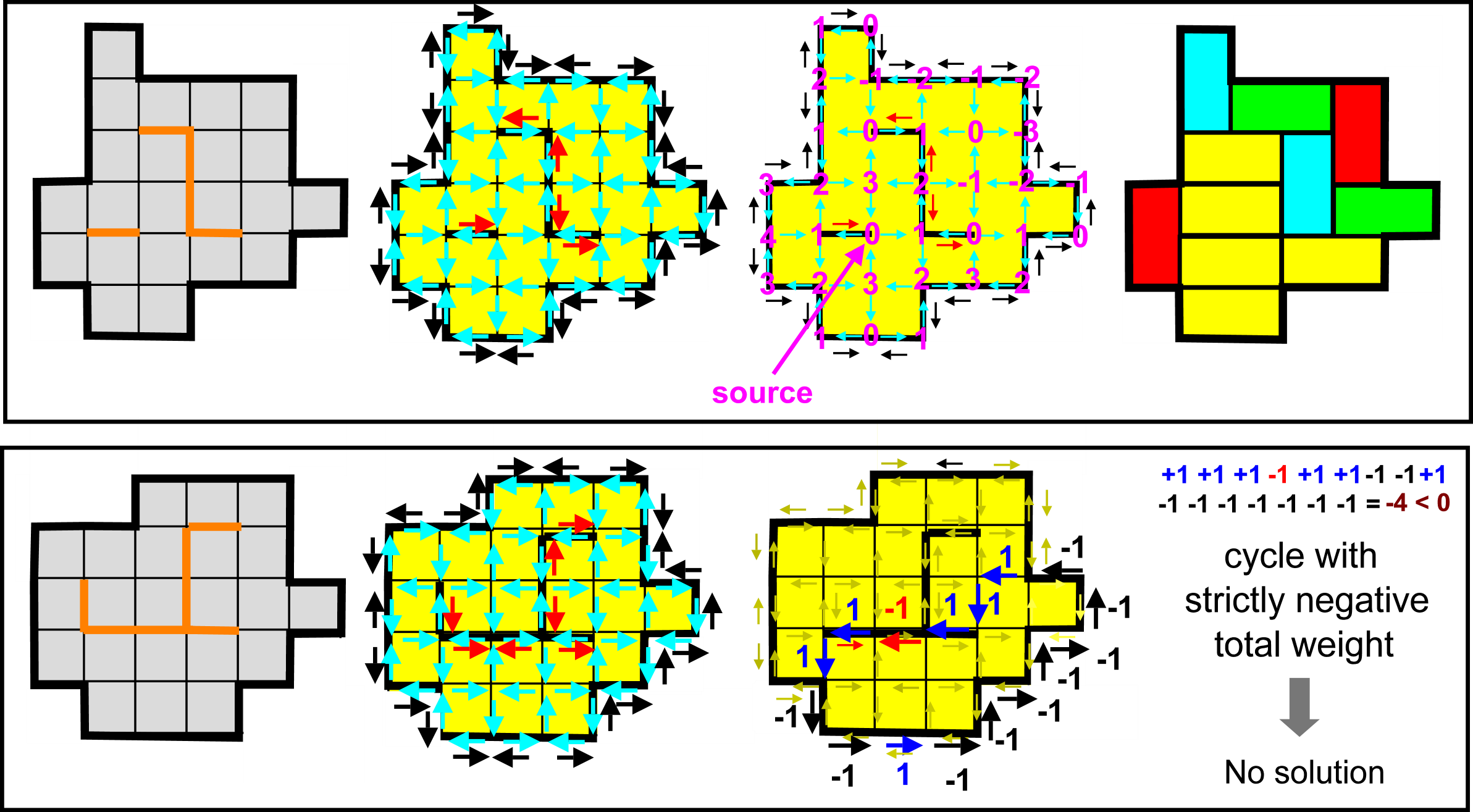}
  \end{center}
  \caption{\label{execut}
  \textbf{Domino tilings computed via shortest paths.}
  On the left, two instances of domino tilings with interior edges (in orange)
  that must not be overlapped.
  The corresponding weighted directed graph encoding the difference constraints
  contains blue, black, and red edges.
  Blue edges have weight $+1$ and are oriented clockwise around alternating faces
  (as on a chessboard).
  Black and red edges are oriented in the opposite direction and have weight $-1$.
  In the third column, distances are computed from an arbitrary source.
  As for lozenge tilings, if the graph contains a cycle of strictly negative total
  weight (second row), the tiling instance has no solution.
  Otherwise (first row), a tiling solution is obtained by connecting adjacent vertices
  whose distances to the source differ by $1$.
  }
\end{figure}

\bibliography{ref}
\end{document}